\def\oX{\overline{X}}
\def\oG{\overline{G}}
\def\x{\bm{x}}
\def\ouralg{D$^2$\xspace}
\newcommand\numberthis{\addtocounter{equation}{1}\tag{\theequation}}
\newcounter{ass_counter}
\newcounter{thm_counter}
\newtheorem{theorem}[thm_counter]{Theorem}
\newtheorem{lemma}[thm_counter]{Lemma}
\newtheorem{corollary}[thm_counter]{Corollary}
\newtheorem{assumption}[ass_counter]{Assumption}
\Crefname{assumption}{Assumption}{Assumptions}
\def\JL{\color{black}}
\newcommand{\xr}[1]{\textcolor{black}{#1}}
\author[1]{Hanlin Tang\thanks{\texttt{htang14@ur.rochester.edu}}}
\author[1]{Xiangru Lian\thanks{\texttt{xiangru@yandex.com}}}
\author[4]{Ming Yan\thanks{\texttt{yanm@math.msu.edu}}}
\author[2]{Ce Zhang\thanks{\texttt{ce.zhang@inf.ethz.ch}}}
\author[3,1]{Ji Liu\thanks{\texttt{ji.liu.uwisc@gmail.com}}}
\affil[1]{Department of Computer Science, University of Rochester}
\affil[2]{Department of Computer Science, ETH Zurich}
\affil[3]{Tencent AI Lab}
\affil[4]{Department of Computational Mathematics, Science and Engineering, Michigan
State University}
\begin{document}

\title{D$^2$: Decentralized Training over Decentralized Data}

\maketitle

\begin{abstract}

  While training a machine learning  model using multiple workers, each of
    which collects data from their own data sources, it would be most useful
    when the data collected from different workers can be {\em unique} and {\em
      different}. Ironically, recent analysis of decentralized parallel
  stochastic gradient descent (D-PSGD) relies on the assumption that the data
  hosted on different workers are {\em not too different}. In this paper, we ask
  the question: {\em Can we design a decentralized parallel stochastic gradient
    descent algorithm that is less sensitive to the data variance across
    workers?}

  In this paper, we present D$^2$, a novel decentralized parallel stochastic
  gradient descent algorithm designed for large data variance \xr{among workers}
  (imprecisely, ``decentralized'' data). The core of D$^2$ is a variance
  reduction extension of the standard D-PSGD algorithm, which improves the
  convergence rate from $O\left({\sigma \over \sqrt{nT}} +
    {(n\zeta^2)^{\frac{1}{3}} \over T^{2/3}}\right)$ to $O\left({\sigma \over
      \sqrt{nT}}\right)$ where $\zeta^{2}$ denotes the variance among data on
    different workers. As a result, D$^2$ is robust to data variance among
  workers. We empirically evaluated D$^2$ on image classification
    tasks where each worker has access to only the data of a
    limited set of labels, and find that D$^2$ significantly outperforms
  D-PSGD.

\end{abstract}

\section{Introduction}

Training machine learning models in a decentralized way has attracted intensive
interests recently~\cite{Lian_dsgd,Yuan_dsgd,colin2016gossip}. In the
decentralized setting, there is a set of workers, each of which
collect{\color{black}s } data from different data sources. Instead of sending all
of their data to a centralized place, \xr{these workers only communicate with
  their {\em neighbors}}. \xr{The goal is to get a model that is the same as if
  all data are collected in a centralized place.} Decentralized learning
algorithm is important in scenarios in which centralized communication is
expensive or not possible, or the underlying communication network has high
latency.

For decentralized learning to provide benefit, each user should provides data
that is somehow {\em unique}, i.e., \xr{the variance of data collected from
  different workers are large}. \xr{However, many recent theoretical
  results~\cite{Lian_dsgd,lian2017asynchronous,Nedic_D,Yuan_dsgd} all assume a
  bounded data variance across workers} --- when data hosted on different
workers are very different, these approach could converge slowly, both
empirically and theoretically. In this paper, we aim at bringing this
discrepancy between the current theoretical understanding and the requirements
from {\em some} practical scenarios.

\xr{In this paper, we present D$^2$, a novel decentralized learning algorithm
  designed to be robust under high data variance.} \xr{The structure and
  technique of D$^2$ is built upon standard decentralized parallel stochastic
  gradient descent (D-PSGD), but benefits from an additional variance reduction
  component.} \xr{In the D$^2$ algorithm, each worker stores the stochastic
  gradient and {\JL its} local model in last iterate and linearly combines them
  with the current stochastic gradient and local model.} It results in an
improved convergence rate over D-PSGD by eliminating the data variation among
workers. In particular, the convergence rate is improved from {\color{black} $O\left({\sigma \over \sqrt{nT}} + {(n\zeta^2)^{\frac{1}{3}} \over
      T^{2/3}}\right)$}
 to
$O\left({\sigma \over \sqrt{nT}}\right)$ where \xr{$\zeta^{2}$ is the data variation
  among all workers, $\sigma^{2}$ is the data variance within each worker}, $n$ is
the number of workers, and $T$ is the number of iterations. \xr{We empirically
  show $D^2$ can significantly outperform D-PSGD by training an image
  classification model where each worker has access to only the data of a
  limited set of labels.}


Throughout this paper, we consider the following decentralized optimization:
\begin{equation}
  \min_{\bm{x}\in\mathbb{R}^{N}}\quad f(\bm{x}) {\color{black}:=} {1\over n} \sum_{i=1}^n \overbrace{\mathbb{E}_{\xi\sim\mathcal{D}_i}F_{i}(\bm{x}; \xi)}^{=: f_i(\bm{x})},\label{eq:main}
\end{equation}
where $n$ is the number of workers and $\mathcal{D}_i$ is the local data
distribution for worker $i$. All workers are connected to form a connected
graph. \xr{Each worker can only exchange information with its neighbors.}


\paragraph{Definitions and notations}
Throughout this paper, we use following notations and definitions:
\begin{itemize}
\item $\|\cdot\|_F$ denotes the Frobenius norm of {\color{black} matrices}.
\item $\|\cdot\|$ denotes the $\ell_2$ norm for vectors and the spectral norm
  for matrices.
\item $\nabla f(\cdot)$ denotes the gradient of a function $f$.
\item $f^{*}$ denotes the optimal solution of \eqref{eq:main}.
\item $\lambda_{i}(\cdot)$ denotes the $i$th largest eigenvalue of a matrix.
\item $\bm{x}^{(i)}$ denotes the local model of worker $i$. {\color{black} \item
    $\nabla F_i(\bm{x}^{(i)};\xi^{(i)})$ \xr{denotes a local stochastic gradient
      of worker $i$. }}
\item $\bm{1}=[1,1,\cdots,1]^{\top}\in\mathbb{R}^n$ denotes the all-one vector.
\item In order to organize the algorithm more clearly, here we define the
  concatenation of all local variables, stochastic gradients, and their average
  respectively:
\begin{align*}
X:= & [\bm{x}^{(1)}, \dots ,\bm{x}^{(n)}]\in\mathbb{R}^{N\times n},\\
\overline{X}:= & X\frac{\bm{1}}{n}=\frac{1}{n}\sum_{i=1}^n\bm{x}^{(i)},\\
G(X;\xi)
 :=& [\nabla F_1(\bm{x}^{(1)};\xi^{(1)}) ,\dots ,\nabla F_n(\bm{x}^{(n)};\xi^{(n)})] \in \mathbb{R}^{N\times n},\\
\overline{G}(X,\xi):= & G(X,\xi)\frac{\bm{1}}{n}=\frac{1}{n}\sum_{i=1}^n\nabla F_i(\bm{x}^{(i)};\xi^{(i)}),\\
\nabla f(\overline{X}) := & \sum_{i=1}^{n}\frac{1}{n}\nabla f_i\left({\color{black} \overline{X}}\right),\\
\overline{\nabla f}(X):= & \frac{1}{n}\sum_{i=1}^n\nabla f_i(x^{(i)}),
\end{align*}
{\JL where $\xi$ is the collection of randomly sampled data from all workers}
\end{itemize}

{\JL
\paragraph{Organization}
This paper is organized as follows: Section~\ref{sec:rw} reviews related work about the proposed approach; Section~\ref{sec:prelim} introduces the state-of-the-art decentralized stochastic gradient descent method and its convergence rate; Section~\ref{sec:algorithm} introduces the proposed algorithm and its intuition why it can improves the state-of-the-art approach; and Section~\ref{sec:thm}; Section~\ref{sec:exp} validates the proposed approaches via empirical study; and Section~\ref{sec:conclusion} concludes this paper.
}

\section{Related work} \label{sec:rw}

\xr{In this section, we review the stochastic gradient descent algorithm and its
  decentralized variants, decentralized algorithms, and previous variance
  reduction technologies in this section.}

\paragraph{Stochastic gradient descent (SGD)} The SGD approahces
\citep{Ghadimi_dsgd,Moulines_dsgd,Nemi_dsgd} is quite powerful for solving
large-scale machine learning problems. \xr{It achieves a convergence rate of $O
  \left({1}/\sqrt{T}\right)$}. \xr{As an implementation of SGD, the
  \textsl{Centralized Parallel Stochastic Gradient Descent} (\textbf{C-PSGD}),
  has been widely used in parallel computation.} In C-PSGD, a central worker,
whose job is to perform the \xr{variable updates}, is connected to many leaf
workers that \xr{are} used to compute \xr{stochastic gradients} in parallel.
C-PSGD \xr{has been} applied {\color{black} to many deep learning
  \xr{frameworks}}, such as such as CNTK \citep{Seide:2016:CMO:2939672.2945397},
MXNet \citep{chen2015mxnet}, and \xr{TensorFlow} \citep{abadi2016tensorflow}.
The convergence rate of C-PSGD is $O\left(\frac{1}{\sqrt{nT}}\right)$, which
\xr{shows it can achieve} linear speedup with regards to the number of leaf workers.

\paragraph{Decentralized algorithms} \xr{Centralized algorithms  requires a
  central server to communicate with all other workers~\citep{pmlr-v70-suresh17a}}. In contrast,
decentralized algorithms \xr{can work on any connected network and only} rely on
\xr{the} information exchange between neighbor workers~\citep{kashyap2007quantized,lavaei2012quantized,nedic2009quan}.

\xr{Decentralized algorithms are especially useful under a network with limited
  bandwidth or high latency. {\JL It is more favorable when data privacy is sensitive.}
  These advantages have led to successful applications.} The decentralized
approach for multi-task reinforcement learning \xr{was} studied
in~\citet{omidshafiei2017deep,mhamdi2017dynamic}. In~\citet{colin2016gossip}, a
dual based decentralized algorithm \xr{was} proposed to solve the \xr{pairwise
  function optimization. \citet{shi2014linear} and
  \citet{mokhtari2015decentralized} analyzed the decentralized version of the
  ADMM optimization algorithm.
  An information theoretic approach was used to analyze decentralization in
  \citet{dobbe2017fully}. The decentralized version of (sub-)gradient descent
  was studied in \citet{Nedic_D,Yuan_dsgd}. Its $O(1/\sqrt{T})$ convergence
  requires a diminishing stepsize or a constant stepsize that depends on the
  total number of iterations. This phenomenon happens because of the variance
  between the data in different workers, which we call ``outer variance'' to
  differentiate it from the variance in SGD. Recently, there are several
  deterministic decentralized optimization algorithms that allows a constant
  stepsize. For
  example, EXTRA~\cite{shi2015extra} is the first modification of
  decentralized gradient descent that converges under a constant stepsize. Later this algorithm is extended for problems with the sum of smooth and nonsmooth functions at each node~\cite{shi2015proximal}. However, the stepsize depends on both the Lipschitz constant of the differentiable function and the network structure. NIDS is the first algorithm that has a constant network independent stepsize~\cite{li2017decentralized}. This algorithm was simultaneously proposed by~\citet{yuan2017exact} for the smooth case only using a different approach. For directed networks, the algorithm DIGing is proposed in~\citet{nedic2017network}, where two exchanges are needed in each iteration}. \footnote{\JL To Prof. Yan: could you write couple of sentences to summarize these papers.}


\paragraph{Decentralized parallel stochastic gradient descent (D-PSGD)}
\xr{The D-PSGD algorithm \citep{Nedic_D,ram2010asynchronous,ram2010distributed} requires each
  worker to compute a stochastic gradient and exchange its local model with
  neighbors.} In \citet{duchi2012dual}, a dual averaging based method is proposed for solving the constrained decentralized SGD optimization. \xr{In~\citet{Yuan_dsgd}, the convergence rate for D-PSGD was
  analyzed when the gradient is assumed to be bounded. In~\citet{Lan_dsgd}, a
decentralized primal-dual type method was proposed with a computational
complexity of $O \left({n}/{\epsilon^2}\right)$ for general convex objectives.
\citet{Lian_dsgd} proved that D-PSGD can {\color{black} admits linear speedup
  w.r.t. number of workers with a similar convergence rate like C-PSGD}.} 


\paragraph{Variance reduction technology} There have been many methods developed
for reducing the variance in SGD, including SVRG
\citep{johnson2013accelerating}, SAGA \citep{defazio2014saga}, SAG
\citep{schmidt2017minimizing}, MISO \citep{mairal2015incremental}, and mS2GD
\citep{konevcny2016mini}. However, most of these technologies are just designed
for the centralized approaches. The DSA algorithm \citep{mokhtari2016dsa}
applies the variance reduction \xr{similar to SAGA on strongly convex
  decentralized optimization problems and proved a linear convergence rate.}
However, the speedup property is unclear and a table of all stochastic gradients
need to be stoblack.

\section{Preliminary: decentralized stochastic gradient descent}
\label{sec:prelim}

\xr{The decentralized stochastic gradient
descent~\citep{Lian_dsgd,zhang2017projection,shahrampour2017distributed} allows
each worker (say worker $i$) maintaining its own local variable $\x^{(i)}$.}
During \xr{each} iteration {\color{black} (say, iteration $t$)}, each worker
perform{\color{black}s the} following steps:
\begin{enumerate}
\item \xr{Query its neighbors' local variables.}
\item \xr{Take weighted average with its local variable and neighbors' local
  variables:}
\[
\x^{(i)}_{t+{1\over 2}} = \sum_{j=1}^nW_{{\color{black} ij}}\x^{(j)}_t
\]
where $W_{ij}$ is the {\color{black} $(i,j)$ element of the \xr{matrix $W$},
  $W_{ij} = 0$ means worker $i$ and worker $j$ are not connected}.
\item Perform one stochastic gradient descent step
\[
\x^{(i)}_{t+1} = \x^{(i)}_{t+{1\over 2}} - \gamma \nabla F(\x^{(i)}_t; \xi^{(i)}_t)
\]
where $\xi^{(i)}_t$ \xr{represents the data sampled in worker $i$ at the
  iteration $t$ {\JL following the distribution $\mathcal{D}_i$}.}
\end{enumerate}
From \xr{a} global \xr{point of view}, the \xr{update rule} D-PSGD algorithm can
{\color{black} be} \xr{viewed as}
\[
X_{t+1} = X_{t}W - \gamma G(X_t; \xi_t).
\]
It admits the following rate shown in Theorem \ref{thm:zflasjdkf}.
{
  \begin{theorem} [{Convergence rate of D-PSGD}
    \citep{Lian_dsgd}] \label{thm:zflasjdkf} \xr{Under certain assumptions},
    {\color{black} the output of D-PSGD admits the following inequality}
\begin{align*}
     \frac{1 - \gamma L}{2T}  \sum_{t = 0}^{T -
      1} \mathbb{E} \left\|\overline{\nabla f} (X_t)\right\|^2 + \frac{D_{1}}{T} \sum_{t =
      0}^{T - 1} \mathbb{E} \left\| \nabla f \left( \overline{X}_t \right) \right\|^2
    \leq  \frac{f (0) - f^{\ast}}{\gamma T} + \frac{\gamma L}{2n}\sigma^{2} +   \frac{ \gamma^2L^{2} n \sigma^2 }{(1 - \lambda) D_{2}}
                + \frac{9 \gamma^2L^{2} n  \varsigma^2 }{(1 -\sqrt{ \lambda})^{2}  D_{2} } ,
\end{align*}
where $\rho$ reflects the property of the network, $D_1$ and $D_2$ are defined to be
\begin{align*}
D_1 := &\left( \frac{1}{2} - \frac{9\gamma^{2} L^2 n}{(1 -\sqrt{ \rho})^{2} D_2} \right)\\
D_2 := &\left( 1 - \frac{18\gamma^{2}}{(1 -\sqrt{ \rho})^{2}} nL^2 \right)
\end{align*}
and $\sigma$ and $\varsigma$ measure the variation within each worker and among all workers respectively
\begin{align}
\mathbb{E}_{\xi\sim \mathcal{D}_i} \left\| \nabla F_i (\bm{x}; \xi) - \nabla f_i (\bm{x})\right\|^2 \leqslant & \sigma^2, \quad \forall i, \forall \bm{x},\\
     {1\over n}\sum_{i=1}^n\left\| \nabla f_i (\bm{x})-\nabla f (\bm{x})\right\|^2 \leqslant & \zeta^2 , \quad \forall i, \forall \bm{x}.
\end{align}
\end{theorem}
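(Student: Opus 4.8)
The plan is to follow the standard template for analyzing decentralized SGD: control the trajectory of the averaged iterate $\overline{X}_t$, separately control the consensus error (the deviation of each local model from the average), and then couple the two. First I would derive a clean recursion for the average. Since the mixing matrix $W$ is doubly stochastic, multiplying the global update $X_{t+1}=X_tW-\gamma G(X_t;\xi_t)$ on the right by $\bm{1}/n$ annihilates the mixing step and yields $\overline{X}_{t+1}=\overline{X}_t-\gamma\overline{G}(X_t;\xi_t)$. I would then apply the $L$-smoothness descent lemma to $f$ at $\overline{X}_t$, take conditional expectation, and use two facts: (i) the stochastic gradient is unbiased, so $\mathbb{E}[\overline{G}(X_t;\xi_t)\mid\mathcal{F}_t]=\overline{\nabla f}(X_t)$, and (ii) because the $n$ local stochastic gradients are independent, the averaged noise has second moment at most $\sigma^2/n$, giving $\mathbb{E}\|\overline{G}(X_t;\xi_t)\|^2\le\|\overline{\nabla f}(X_t)\|^2+\sigma^2/n$. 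The cross term $\langle\nabla f(\overline{X}_t),\overline{\nabla f}(X_t)\rangle$ I would split by the polarization identity into $\tfrac12\|\nabla f(\overline{X}_t)\|^2+\tfrac12\|\overline{\nabla f}(X_t)\|^2-\tfrac12\|\nabla f(\overline{X}_t)-\overline{\nabla f}(X_t)\|^2$; this is exactly where the two gradient-norm terms on the left-hand side of the claim appear, and the residual $\|\nabla f(\overline{X}_t)-\overline{\nabla f}(X_t)\|^2$ is controlled through $L$-smoothness by the consensus error $\|X_t-\overline{X}_t\bm{1}^\top\|_F^2$.

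The heart of the proof is bounding this consensus error. Subtracting the average from the global update, and using that $W$ fixes the $\bm{1}$ direction and hence commutes with the averaging projection $P=I-\tfrac1n\bm{1}\bm{1}^\top$, gives $X_{t+1}-\overline{X}_{t+1}\bm{1}^\top=(X_t-\overline{X}_t\bm{1}^\top)W-\gamma\,G(X_t;\xi_t)P$. Restricted to the subspace orthogonal to $\bm{1}$, $W$ contracts by its second-largest eigenvalue magnitude (the network parameter, written $\rho$ or $\lambda$ in the statement), so unrolling the recursion as a geometric series produces the factors $1/(1-\lambda)$ and $1/(1-\sqrt{\rho})^2$. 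Crucially, the driving term $G(X_t;\xi_t)P$ decomposes into a zero-mean stochastic part, whose accumulated second moment contributes the $\gamma^2L^2n\sigma^2$ piece, and a deterministic part coming from the disagreement of the true local gradients $\nabla f_i$, whose magnitude is governed precisely by the among-worker variance $\zeta^2$ (denoted $\varsigma^2$ in the statement) and contributes the $\gamma^2L^2n\zeta^2$ piece. This is the step I expect to be the main obstacle: one must track cross terms between noise at different iterations, bound the accumulated $\mathbb{E}\|\overline{\nabla f}(X_t)\|^2$ uniformly in terms of quantities that later reappear on the left side, and keep the geometric-series constants tight enough to recover the exact $(1-\sqrt{\rho})^{2}$ dependence rather than a looser power.

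Finally I would sum the per-iteration descent inequality over $t=0,\dots,T-1$, telescoping the $f(\overline{X}_t)$ terms to $f(\overline{X}_0)-f^\ast=f(0)-f^\ast$ under the standard zero initialization $\overline{X}_0=0$, substitute the summed consensus-error bound, and collect coefficients. The self-referential coupling, namely that the consensus-error bound feeds a term proportional to $\sum_t\mathbb{E}\|\overline{\nabla f}(X_t)\|^2$ back into the descent inequality, is absorbed into the definitions of $D_1$ and $D_2$; this is exactly why those constants carry the $1-\frac{18\gamma^2nL^2}{(1-\sqrt{\rho})^2}$ structure and the $D_2$-denominator. Dividing through by $\gamma T$ and rearranging isolates the two averaged gradient-norm terms on the left and the four initialization/noise terms on the right, yielding the stated inequality. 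Throughout I would assume $\gamma$ is small enough that $D_1,D_2>0$, so that the rearrangement preserves the direction of the inequality.
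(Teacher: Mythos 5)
Note first that the paper does not actually prove this statement: Theorem~\ref{thm:zflasjdkf} is imported verbatim from \citet{Lian_dsgd}, and the supplement only proves Theorem~\ref{theo_1}, so the only available comparison is with that reference's argument and with the analogous machinery the paper reuses for \ouralg (the descent step with the polarization identity in Lemma~\ref{lemma_boundfplus}, and the consensus-error bound driven by the spectral gap in Lemma~\ref{lemma_bound_x_ave}). Your outline is exactly that standard argument --- averaged-iterate recursion, descent lemma with the $\sigma^2/n$ noise bound, consensus error contracted by the second eigenvalue and split into a $\sigma^2$ noise part and a $\zeta^2$ disagreement part, and the self-referential term absorbed into $D_1,D_2$ --- so it is essentially the same approach, correctly identified at the level of a plan rather than a completed proof.
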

\xr{Choosing the {\color{black} optimal} steplength $\gamma=\frac{1}{L +
  \sigma\sqrt{\frac{K}{n}} + {\color{black}n^{\frac{1}{3}}}\zeta^{\frac{2}{3}}T^{\frac{1}{3}}}$ we have the
following convergence rate:
}
}
\[
{1\over T}\sum_{t=1}^T\mathbb{E}(\|\nabla f(\overline{X}_t)\|^2) \leq {\color{black} O}\left({\sigma \over \sqrt{nT}} + \frac{{\color{black}n^{\frac{1}{3}}}\zeta^{\frac{2}{3}}}{T^{\frac{2}{3}}} {\color{black} + \frac{1}{T}}\right).
\]

The proposed \ouralg algorithm can improve the convergence rate by
{{\color{black}removing the dependence to the global bound of outer variance
    $\zeta$.}

\section{The \xr{$D^{2}$} algorithm}
\label{sec:algorithm}

\begin{algorithm}[h]
\caption{\ouralg algorithm}\label{alg2}
\begin{minipage}{1.0\linewidth}
\begin{algorithmic}[1]
  \STATE {\bfseries Input:} Initial point $\bm{x}^{(i)}_0=\bm{0}$, iteration
  step length $\gamma$, confusion matrix $W$, and \xr{the total number of
    iterations $T$}  \FOR{t = 0,1,2,...,T}

  \STATE Randomly sample $\xi^{(i)}_t$ from \xr{the local data} of the
  {\color{black}$i$th} worker{\color{black}.}
  \STATE Compute a local stochastic gradient based on $\xi^{(i)}_k$ and current
  optimization variable $\bm{x}^{(i)}_t:\nabla
  F_i(\bm{x}^{(i)}_t;\xi^{(i)}_t)${\color{black}.}
\STATE 
\IF{t=0}
\STATE
$\bm{x}_{t+\frac{1}{2}}^{(i)} = \bm{x}_{t}^{(i)} - \gamma\nabla F_i(\bm{x}^{(i)}_t;\xi^{(i)}_t),
$
\ELSE
\STATE 
$\bm{x}_{t+\frac{1}{2}}^{(i)} = 2\bm{x}_{t}^{(i)} - \bm{x}_{t-1}^{(i)}
- \gamma\nabla F_i(\bm{x}^{(i)}_t;\xi^{(i)}_t) + \gamma\nabla F_i(\bm{x}^{(i)}_{t-1};\xi^{(i)}_{t-1}).
$
\ENDIF
\STATE Each worker sends $\bm{x}_{t+\frac{1}{2}}^{(i)}$ to its neighbors, and take the weighted average
\begin{align*}
\bm{x}_{t+1}^{(i)}=\sum_{j=1}^{n}W_{{\color{black} ij}}\bm{x}^{(j)}_{t+\frac{1}{2}},
\end{align*}
where $\bm{x}_{t+\frac{1}{2}}^{(j)}$ is from \xr{the} worker $j$.\ENDFOR
\STATE {\bfseries Output:} $\frac{1}{n}\sum_{i=1}^{n}\bm{x}^{(i)}_T$
\end{algorithmic}
\end{minipage}
\end{algorithm}

{\color{black} In \ouralg algorithm, each worker repeats the following updating
  rule (say, at iteration $t$) for worker $i$}
\begin{enumerate}
\item \xr{ Compute a} local stochastic gradient $\nabla
  F(\bm{x}_t^{(i)};\xi_t^{(i)})$ {\color{black} by sampling $\xi_t^{(i)}$
    \xr{from} distribution $\mathcal{D}^{(i)}$};
\item Update the local model $\bm{x}_{t+\frac{1}{2}}^{(i)} \gets
  2\bm{x}_{t}^{(i)} - \bm{x}_{t-1}^{(i)} - \gamma\nabla
  F_i\left(\bm{x}^{(i)}_t;\xi^{(i)}_t\right) + \gamma\nabla
  F_i\left(\bm{x}^{(i)}_{t-1};\xi^{(i)}_{t-1}\right) $ using \xr{the} local
  models and stochastic gradients \xr{in} \xr{both the} $t$th iteration and
  \xr{the} $(t-1)$th iteration.
\item When the synchronization barrier is met, exchange
  $\bm{x}_{t+\frac{1}{2}}^{(i)}$ \xr{with neighbors:}
 \begin{align*}
\bm{x}_{t+1}^{(i)}=\sum_{j=1}^{n}W_{{\color{black} ij}}\bm{x}^{(j)}_{t+\frac{1}{2}}.
 \end{align*}

\end{enumerate}
\xr{From a global point of view, the update rule of \ouralg can be viewed as:}
\begin{align*}
X_{t+1} = & \left(2X_t - X_{t-1} - \gamma G(X_t; \xi_t) + \gamma G(X_{t-1}; \xi_{t-1})\right)W. 
\end{align*}
{\color{black} The complete algorithm is summarized in Algorithm~\ref{alg2}. }

\paragraph{\ouralg essentially runs the stochastic gradient descent step.}
To understand the intuition of \ouralg, let us consider the mean value
$\overline{X}_t$, {\color{black}which} \xr{gets updated just like the standard
stochastic gradient descent:}
\begin{small}
\begin{align*}
\overline{X}_{t+1} = & \left(2X_t - X_{t-1} - \gamma G(X_t; \xi_t) + \gamma G(X_{t-1}; \xi_{t-1})\right)W\frac{\bm{1}_n}{n},\\
\overline{X}_{t+1} = & 2\overline{X}_t - \overline{X}_{t-1} - \gamma \overline{G}(X_t; \xi_t) + \gamma \overline{G}(X_{t-1}; \xi_{t-1}),
\end{align*}
\end{small}
or equivalently
\begin{align*}
 \overline{X}_{t+1} - \overline{X}_t
 = & \overline{X}_t - \overline{X}_{t-1} - \gamma \overline{G}(X_t; \xi_t) + \gamma \overline{G}(X_{t-1}; \xi_{t-1}),\\
 = & \overline{X}_1 - \overline{X}_{0} - \gamma\sum_{k=1}^t\left(\overline{G}(X_t; \xi_t) - \overline{G}(X_{t-1}; \xi_{t-1})\right)\\
= & -\gamma \overline{G}(X_t; \xi_t).\quad\text{(due to $X_1 = X_0 -\gamma G(X_0; \xi_0)$)}.\numberthis \label{global_mean_evolution}
\end{align*}

\paragraph{Why \ouralg improves the D-PSGD?}
Acute reviewers may notice that the D-PSGD algorithm also essentially updates in
the form of stochastic gradient descent in \eqref{global_mean_evolution}. Then
why D$^2$ can improve D-PSGD?

Assume that {\JL $X_{t}$} has achieved the optimum $X^*:= x^* {\bf 1}^{\top}$ with
all local models equal to the optimum $x^*$ to \eqref{eq:main}. {\color{black}
  Then for D-PSGD}, \xr{the next update will be}
\[
X_{t+1} = X^* - \gamma G(X^*; \xi_{t}).
\]
\xr{It shows that the convergence when we approach a solution is affected by}
$\mathbb{E}[\|G(X^*; \xi_{t}\|_F^2]${\color{black},} {\color{black}which} is
bounded by
\[
\mathcal{O}(\sigma^2 + \zeta^2).
\]
\xr{as we can see from the following:}
\begin{align*}
& \mathbb{E}[\|G(X^*; \xi_{t}\|^2_F]\\
 = & \mathbb{E}\sum_{i=1}^n\left\|\left(\nabla F_i(\bm{x}^{*};\xi^{(i)}_{t+1}) - \nabla f_i(\bm{x}^{*})\right) + \nabla f_i(\bm{x}^{*})\right\|^2\\
\leq & 2\mathbb{E}\sum_{i=1}^n\left\|\left(\nabla F_i(\bm{x}^{*};\xi^{(i)}_{t+1}) - \nabla f_i(\bm{x}^{*})\right)\right\|^2 + 2\left\|\nabla f_i(\bm{x}^{*}) - \nabla f(\bm{x}^{*})\right\|^2\\
\leq & 2\sigma^2 + 2\zeta^2.
\end{align*}
Next we apply a similar analysis for \ouralg by assuming that both $X_{t-1}$ and
$X_t$ have \xr{reached {\color{black} the optimal} solution $X^{*}$}. \xr{The next update for $D^{2}$ will
  be:}
\[
X_{t+1} = \left(X^* - \gamma G(X^*; \xi_t) - \gamma G(X^*; \xi_{t-1})\right)W.
\]
\xr{It shows that for $D^{2}${\color{black}, the convergence when we approach a solution relies on the magnitude of
  $\mathbb{E}[\|G(X^*; \xi_t)- G(X^*; \xi_{t-1}\|^2_F]$, which is bounded by:}
 }
\[
\mathcal{O}(\sigma^2),
\]
\xr{which can be seem from:}
\begin{align*}
 \mathbb{E}[\|G(X^*; \xi_t)- G(X^*; \xi_{t-1})\|^2_F 
= &\mathbb{E}\sum_{i=1}^n\left\|\nabla F_i(\bm{x}^*; \xi_t^{(i)})- \nabla f_i(\bm{x}^*)\right\|^2
 - \mathbb{E}\sum_{i=1}^n\left\|\nabla F_i(\bm{x}^*; \xi_{t-1}^{(i)})- \nabla f_i(\bm{x}^*)\right\|^2
\\ \leq & 2\sigma^2.
\end{align*}
%
%

\section{Theoretical guarantee}
\label{sec:thm}

{This section provides the theoretical guarantee for the proposed \ouralg algorithm. We first give the assumptions requiblack below.}

\begin{assumption}
  \label{ass:global}
  Throughout this paper, we make the following commonly used assumptions:
  \hfill
  \begin{enumerate}
  \item \textbf{Lipschitzian gradient:} All function $f_i(\cdot)$'s are with $L$-Lipschitzian gradients.
      \item \textbf{Bounded variance:} \xr{Assume bounded variance of stochastic gradient {\emph{within} each worker}}
\xr{      \begin{align*}
      \mathbb{E}_{\xi\sim \mathcal{D}_i} \left\| \nabla F_i (\bm{x}; \xi) - \nabla f_i (\bm{x})\right\|^2 \leqslant & \sigma^2, \quad \forall i, \forall \bm{x}.
\end{align*}
} \item \textbf{Symmetric confusion matrix:}
The confusion matrix $W$ is symmetric and satisfies $W\bm{1}=\bm{1}$.
\item \textbf{Spectral gap:} Let the eigenvalues of $W\in \mathbb{R}^{n\times
    n}$ be $\lambda_1 \geq \lambda_2 \geq \cdots \geq \lambda_n$. Denote
  by {\color{black} for short
\begin{align*}
\lambda := \max \limits_{i\in \{2,\cdots, n\}} \lambda_i = \lambda_2.
\end{align*}
We assume $\lambda<1$ and $\lambda_n>-\frac{1}{3}$.}
\item \textbf{Initialization:} \xr{W.l.o.g.,} assume all \xr{local variables}
  are initialized by zero, that is, $X_0 = 0$.
  \label{Ass:init} 
  \end{enumerate}
\end{assumption}

Existing decentralized consensus
algorithms~\citep{shi2015proximal,li2017decentralized} use a modification of the
doubly stochastic matrix such that $\lambda>0$, i.e., choose $W=(\tilde{W}+I)/2$
where $W$ is a doubly stochastic matrix. Recently,~\citet{li2017primal} show
that $\lambda_n>-1/3$ is optimal in the convergence of EXTRA. However, the
optimal $\lambda_n$ for NIDS \citep{li2017decentralized} is {\color{black} unknown}. In this paper,
{\color{black}we proved that $-\frac{1}{3}$ is the infimum of $\lambda_n$,} and
when it blackuces to deterministic case, this condition is weaker than that
in~\cite{li2017decentralized}. This is important, because we actually can use a
$W$ that performs better.

\xr{Given Assumption \ref{ass:global}, we have following convergence guarantee
  for $D^{2}$: }

\begin{theorem}[Convergence of Algorithm~\ref{alg2}] \label{theo_1}
Choose the steplength $\gamma$ in Algorithm~\ref{alg2} to be a constant {satisfying} $1-24C_2\gamma^2L^2>0$. Under Assumption~\ref{ass:global}, we have the following convergence rate for Algorithm~\ref{alg2}:
\begin{align*}
& A_1\|\nabla f(\bm{0})\|^2 + \sum_{t=1}^{T-1}\left(\mathbb{E}\|\nabla f({\color{black}\overline{X}_t})\|^2 + A_2\mathbb{E}\|\overline{\nabla f}(X_t)\|^2\right)\\
\leq & \frac{2(f(0)-f^*)}{\gamma}
 + \frac{LT\gamma}{n}\sigma^2  + \frac{6L^2C_1\gamma^2\zeta_0^2}{C_3} \numberthis \label{eq:thm2} 
  +\frac{12L^2C_2\gamma^2\sigma^2T}{C_3} + \frac{6L^2C_2\gamma^4L^2\sigma^2T}{nC_3}
  + \frac{6L^2C_1\gamma^2\sigma^2}{C_3},
\end{align*}
where
\begin{align*}
\zeta_0 := & \frac{1}{n}\sum_{i=1}^n \|\nabla f_i(\bm{0}) - \nabla f(\bm{0})\|^2,\\
v := & \lambda_n -\sqrt{\lambda_n^2 -\lambda_n},\\
C_1 := & \max \left\{\frac{1}{1-|v|^2},\frac{1}{(1-\lambda)^2}\right\} \geq 1,\\
C_2 := & \max \left\{\frac{\lambda_n^2}{(1-|v|^2)},\frac{\lambda^2}{(1-\sqrt{\lambda})^2(1-\lambda)}\right\},\\
C_3 := & 1-24C_2\gamma^2L^2,\\
A_1 := & 1 - \frac{6L^2C_1\gamma^2}{C_3},\\
A_2 := & 1-L\gamma-\frac{6L^2C_2\gamma^4L^2}{C_3}.
\end{align*}
\end{theorem}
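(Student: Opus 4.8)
The plan is to exploit the structural fact recorded in \eqref{global_mean_evolution}: the averaged iterate $\overline{X}_t$ obeys the \emph{exact} SGD recursion $\overline{X}_{t+1} - \overline{X}_t = -\gamma\,\overline{G}(X_t;\xi_t)$. This splits the proof into two ingredients: a standard descent estimate for SGD applied to the smooth objective $f$, and a bound on the \emph{consensus error} $\sum_t \mathbb{E}\|X_t - \overline{X}_t\bm{1}^\top\|_F^2$, which measures how far the local models drift from their average. The term carrying $A_2\|\overline{\nabla f}(X_t)\|^2$ on the left of \eqref{eq:thm2} will come from the descent step, while $\|\nabla f(\overline{X}_t)\|^2$ appears after converting the gradient of $f$ at the mean into the mean of local gradients.

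First I would establish the descent lemma. By $L$-smoothness of $f$ and the SGD form of the mean update,
\begin{align*}
\mathbb{E}\,f(\overline{X}_{t+1}) \le f(\overline{X}_t) - \gamma\langle \nabla f(\overline{X}_t),\, \overline{\nabla f}(X_t)\rangle + \frac{L\gamma^2}{2}\,\mathbb{E}\|\overline{G}(X_t;\xi_t)\|^2.
\end{align*}
I would then split the inner product via $2\langle a,b\rangle = \|a\|^2 + \|b\|^2 - \|a-b\|^2$ and control the cross term by Lipschitz continuity, $\|\nabla f(\overline{X}_t) - \overline{\nabla f}(X_t)\|^2 \le \frac{L^2}{n}\sum_{i}\|\overline{X}_t - \bm{x}_t^{(i)}\|^2$, which is exactly the consensus error. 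The second moment is reduced by the within-worker variance assumption, $\mathbb{E}\|\overline{G}(X_t;\xi_t)\|^2 \le \|\overline{\nabla f}(X_t)\|^2 + \frac{\sigma^2}{n}$, since averaging $n$ independent unbiased estimates shrinks the variance by $1/n$. Summing over $t$ and telescoping $f(\overline{X}_{t+1}) - f(\overline{X}_t)$ (with $\overline{X}_0 = \bm{0}$) produces the $\frac{2(f(0)-f^*)}{\gamma}$ and $\frac{LT\gamma}{n}\sigma^2$ terms, after which only the consensus error remains to be bounded.

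The crux, and the main obstacle, is bounding $\sum_t\mathbb{E}\|X_t P\|_F^2$ where $P := I - \tfrac{1}{n}\bm{1}\bm{1}^\top$ projects onto the non-consensus subspace. Unlike D-PSGD, whose deviation follows a first-order recursion contracting at rate $\lambda$, the \ouralg update $X_{t+1} = (2X_t - X_{t-1} - \gamma G_t + \gamma G_{t-1})W$ produces a \emph{second-order} recursion in the deviation. Since the symmetric $W$ commutes with $P$, I would diagonalize $W$ and project onto the eigenvector for each eigenvalue $\mu = \lambda_k$ with $k\ge 2$; the homogeneous part is $z_{t+1} - 2\mu z_t + \mu z_{t-1} = 0$, with characteristic roots $\mu \pm \sqrt{\mu^2-\mu}$. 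For $\mu\in(0,1)$ the roots are complex of modulus $\sqrt{\mu}\le\sqrt{\lambda}$, while for the most negative eigenvalue they are real with the extremal one equal to $v = \lambda_n - \sqrt{\lambda_n^2-\lambda_n}$; the hypothesis $\lambda_n > -\tfrac13$ is precisely what forces $|v|<1$, making every non-consensus mode contractive. These two regimes are exactly what the maxima defining $C_1$ and $C_2$ encode. Treating $\gamma(G_{t-1}-G_t)$ as a forcing term, a geometric-sum (convolution) argument bounds the accumulated deviation by $\sum_t\mathbb{E}\|G_t - G_{t-1}\|_F^2$ plus the contribution of the first iterate $X_1 = -\gamma G_0 W$. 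The latter is where the outer variance enters --- but only through the gradient at $\bm{0}$, i.e.\ through $\zeta_0$, giving a one-time $T$-independent term rather than the persistent $\zeta^2 T$ of D-PSGD; this is the source of D$^2$'s improvement. Each gradient difference is further split into a variance part contributing $\sigma^2$ and a deterministic part re-expressed through $\|\nabla f(\overline{X}_t)\|^2$ and $\|\overline{\nabla f}(X_t)\|^2$ via smoothness.

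Finally I would substitute the consensus bound back into the summed descent inequality and collect terms. Because that bound itself contains the gradient-norm sums, the estimate is self-referential, and the condition $1 - 24C_2\gamma^2L^2 > 0$ (i.e.\ $C_3 > 0$) is exactly what allows the offending terms to be moved to the left-hand side and divided through, yielding the coefficients $A_1, A_2$ and the $1/C_3$ factors in \eqref{eq:thm2}. The remaining work --- tracking the numerical constants through the Young-type inequalities --- is routine once the characteristic-root contraction rates $\sqrt{\lambda}$ and $|v|$ are pinned down.
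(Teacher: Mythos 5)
Your proposal matches the paper's own proof in every essential respect: the descent lemma applied to the exact SGD recursion for $\overline{X}_t$, the reduction of the error term to the consensus deviation, the eigen-decomposition of $W$ turning the deviation into per-mode second-order recursions with characteristic roots $\mu\pm\sqrt{\mu^2-\mu}$ (complex of modulus $\sqrt{\mu}$ for $\mu\in(0,1)$, real with $|v|<1$ exactly when $\lambda_n>-\tfrac13$, which is where $C_1,C_2$ come from), the geometric-convolution bound on the forced recursion, the entry of $\zeta_0$ solely through $X_1=-\gamma G(0;\xi_0)W$, and the resolution of the self-referential bound via $C_3>0$. This is the same argument; only the constant-tracking remains.
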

 By appropriately specifying the step length $\gamma$ we
  reach the following corollary:}

\begin{corollary}\label{corollary2}
Choose the {\color{black}step length} $\gamma$ in Algorithm~\ref{alg2} to be $\gamma = \frac{1}{8\sqrt{C_2}L + 6\sqrt{C_1}L + \sigma\sqrt{\frac{T}{n}}} $, {\color{black}where $C_1$ and $C_2$ are defined in Theorem~\ref{theo_1}}. Under Assumption~\ref{ass:global}, the following convergence rate holds
\begin{align*}
\frac{1}{T}\sum_{t=0}^T\mathbb{E}\|\nabla f(\overline{X_t})\|^2
\lesssim & \frac{\sigma}{\sqrt{nT}} + \frac{1}{T}  + \frac{\zeta_0^2}{T + \sigma^2T^2}
 + \frac{\sigma^2}{1+ \sigma^2T},
\end{align*}
where $\zeta_0$ is defined in Theorem~\ref{theo_1} and we treat $f(0)- f^*$, $L$, $\lambda_n$, and $\lambda$ as constants.
\end{corollary}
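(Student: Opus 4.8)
The plan is to substitute the prescribed step length $\gamma=\frac{1}{8\sqrt{C_2}L+6\sqrt{C_1}L+\sigma\sqrt{T/n}}$ into the bound~\eqref{eq:thm2} of Theorem~\ref{theo_1} and simplify the right-hand side term by term, writing $b:=8\sqrt{C_2}L+6\sqrt{C_1}L$ for the constant part of the inverse step length. The first task is to check that this choice keeps every coefficient healthy. Because the denominator of $\gamma$ contains the summands $8\sqrt{C_2}L$ and $6\sqrt{C_1}L$ together with the nonnegative term $\sigma\sqrt{T/n}$, we immediately obtain $\gamma\le\frac{1}{8\sqrt{C_2}L}$ and $\gamma\le\frac{1}{6\sqrt{C_1}L}$. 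The first inequality gives $24C_2\gamma^2L^2\le\frac38$, so the hypothesis $C_3=1-24C_2\gamma^2L^2>0$ of Theorem~\ref{theo_1} holds with $C_3\ge\frac58$; the second gives $6L^2C_1\gamma^2\le\frac16$, whence $A_1=1-\frac{6L^2C_1\gamma^2}{C_3}\ge\frac{11}{15}>0$, and a similar estimate (combining $C_2\gamma^2L^2\le\frac1{64}$ with $L\gamma\le\frac16$) shows $A_2$ is bounded below by a positive absolute constant. Since $L,\lambda_n,\lambda$ are treated as constants and $C_1,C_2$ depend only on $\lambda_n,\lambda$, all of $b,C_1,C_2,C_3,A_1,A_2$ are $\Theta(1)$.

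With the coefficients controlled, the next step is to reduce the left-hand side of~\eqref{eq:thm2} to the quantity the corollary targets. Since $A_1>0$ and $A_2>0$, both $A_1\|\nabla f(\bm0)\|^2$ and each $A_2\,\mathbb{E}\|\overline{\nabla f}(X_t)\|^2$ are nonnegative. Dropping the $A_2$ terms and using $\overline{X}_0=\bm0$ (so that $A_1\|\nabla f(\bm0)\|^2$ recovers the $t=0$ term up to the constant $A_1$), the left-hand side lower-bounds $c\sum_{t}\mathbb{E}\|\nabla f(\overline{X}_t)\|^2$ for an absolute constant $c>0$, after relabeling the summation range (the single endpoint term being a harmless boundary adjustment absorbed into the final $\lesssim$ and the $\frac1T$ rate). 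Dividing by $cT$ reduces the corollary to showing that $\frac1T$ times the right-hand side of~\eqref{eq:thm2} is $\lesssim\frac{\sigma}{\sqrt{nT}}+\frac1T+\frac{\zeta_0^2}{T+\sigma^2T^2}+\frac{\sigma^2}{1+\sigma^2T}$.

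The core is then a term-by-term estimate of $\frac1T$ times the right-hand side of~\eqref{eq:thm2}, using $\frac1\gamma=b+\sigma\sqrt{T/n}\lesssim 1+\sigma\sqrt{T/n}$, the bounds $\gamma\le\Theta(1)$ and $\gamma\le\frac{1}{\sigma}\sqrt{n/T}$, and $C_1,C_2,C_3=\Theta(1)$. The first term gives $\frac{2(f(0)-f^*)}{\gamma T}\lesssim\frac1T+\frac{\sigma}{\sqrt{nT}}$, supplying the first two summands; $\frac{L\gamma\sigma^2}{n}\lesssim\frac{\sigma}{\sqrt{nT}}$ follows from $\gamma\le\frac{1}{\sigma}\sqrt{n/T}$. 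The initial-dispersion term reduces to $\frac{6L^2C_1\gamma^2\zeta_0^2}{C_3T}\lesssim\frac{\zeta_0^2}{T(b+\sigma\sqrt{T/n})^2}$ and the first variance term to $\frac{12L^2C_2\gamma^2\sigma^2}{C_3}\lesssim\frac{\sigma^2}{(b+\sigma\sqrt{T/n})^2}$; expanding $(b+\sigma\sqrt{T/n})^2$ and keeping the constant and the $\sigma^2T/n$ contributions turns these into the stated $\frac{\zeta_0^2}{T+\sigma^2T^2}$ and $\frac{\sigma^2}{1+\sigma^2T}$, the $n\ge1$ factors inside the denominators being handled coarsely under the $\lesssim$. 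Finally, the last two terms equal the $\frac{12L^2C_2\gamma^2\sigma^2}{C_3}$ term multiplied by the bounded factors $\frac{L^2\gamma^2}{2n}$ and $\frac{C_1}{2C_2T}$ respectively, so they are absorbed into the $\frac{\sigma^2}{1+\sigma^2T}$ contribution. Summing the estimates gives the claimed rate.

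I expect the main obstacle to be precisely the bookkeeping of this last step: confirming that the two variance-like terms collapse to the stated rational functions requires comparing $(b+\sigma\sqrt{T/n})^2$ against $1+\sigma^2T$ (and, after multiplying by $T$, against $T+\sigma^2T^2$), so the $n$-dependence hidden inside $\sigma\sqrt{T/n}$ must be tracked carefully through the $\lesssim$. The conceptual payoff the algebra exposes is that the global outer-variance constant $\zeta$ appearing in Theorem~\ref{thm:zflasjdkf} has vanished entirely: only $\zeta_0$, the dispersion of the local gradients at the single point $\bm0$, survives, and it appears with a coefficient decaying like $T^{-2}$, so that the leading-order rate is the variance-free $\frac{\sigma}{\sqrt{nT}}$.
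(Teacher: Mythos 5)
Your proposal is correct and follows essentially the same route as the paper's own proof: substitute the prescribed $\gamma$ to verify $C_3\geq\tfrac12$, $A_1\geq\tfrac12$, $A_2>0$, drop the nonnegative $A_2$ and $A_1$ contributions, and bound each right-hand-side term of \eqref{eq:thm2} using $\gamma^2\leq\frac{n}{nL^2+\sigma^2T}$. The residual looseness you flag in matching $\frac{\sigma^2}{1+\sigma^2T/n}$ against the stated $\frac{\sigma^2}{1+\sigma^2T}$ (and likewise for the $\zeta_0^2$ term) is present in the paper's own derivation as well, so your treatment is at the same level of precision as the original.
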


Note that we can obtain {\JL even better} constants by choosing different parameters and applying tighter inequalities, however, the main result of this corollary is to show the order of the convergence. {\JL We highlight a few key observations from our theoretical results in the following.}
\begin{description}
\item [Tightness of the convergence rate] Setting $\sigma = 0$ and $\zeta_0=0$, which blackuces the VR-SGD to a normal GD algorithm, we shall see that the convergence rate becomes $O\left(\frac{1}{T}\right)$, which is exactly the rate of GD.
\item [Linear speedup] Since the leading term of the convergence rate is $O
  \left(\frac{1}{\sqrt{nT}}\right)$, which is consistent with the convergence
  rate of C-PSGD, this indicates that we would achieve a linear speed up with
  respect to the number of nodes. {\color{black} \item [Consistent with NIDS] In
    NIDS \citep{li2017primal}, the term depends on $\zeta_0$ in the convergence
    rate is $O\left(\frac{\zeta^2_0}{T} \right)$. While the corresponding term
    in $D^2$ is $O\left(\frac{\zeta_0^2}{T+\sigma^2T^2} \right) $, \xr{which
    indicates when our algorithm is consistent with NIDS because in NIDS
    $\sigma$ is consideblack to be 0.}}
\item [Superiority over D-PSGD] When compablack to D-PSGD, the convergence rate of {\color{black}$D^2$} only depends on $\zeta_0$, and the corresponding decaying rate is $\frac{\zeta_0}{T^2}$. Whereas in D-PSGD \citep{Lian_dsgd}, we need to assume an upper bound for the global variance between different nodes' dataset, and its influence can be compablack to $\sigma^2$, the inner variance of each node itself. This means we can always achieve a much better convergence rate than D-PSGD.
\end{description}

\section{Experiments}
\label{sec:exp}

We evaluate the effectiveness of D$^2$ by comparing it with both centralized and
decentralized SGD \xr{algorithms}.

\subsection{Experiment Settings}

We conduct experiments in two settings.

\begin{enumerate}
\vspace{-1em}
\item \textsc{TransferLearning}: We test the case that
each worker has access to a local pre-trained neural network
as feature extractor, and we want to train a logistic regression
model among all these workers. In our experiment, we
select the first 16 classes of ImageNet and use InceptionV4 as
the feature extractor to extract 2048 features for
each image. We conduct data augmentation and generate a
blurblack version for each image. In total this datasaet contains
16$\times$1300$\times$2 images.
\vspace{-0.5em}
\item \textsc{LeNet}: We test the case that all workers \xr{collaboratively}
  train a neural network model. We train a LeNet on the CIFAR10 dataset. In
  total this dataset contains 50,000 images of size 32$\times$32.
\end{enumerate}
\vspace{-3mm}

One caveat of training more recent neural networks is that
modern architectures often have a batch normalization layer,
which inherently assumes that the data distribution is
uniform across different batches, which is not the case that
we are interested in. In principle, we could
also flow the batch information through the network in a
decentralized way; however, we leave this as future work.

By default, each worker only has {\em exclusive} access to
a subset of classes. For \textsc{TransferLearning}, we use
16 workers and each worker has access to one class; for
\textsc{LeNet}, we use 5 workers and each worker has access to
two classes. For comparison, we also consider a case when
the datasets is first shuffled and then uniformly partitioned
among all the workers, we call this the {\em shuffled case},
and the default one the {\em unshuffled case}. We use a
ring topology for both experiments.

{\bf Parameter Tuning.} For \textsc{TransferLearning}, we
use constant learning rates and tune it from \{0.01, 0.025, 0.05, 0.075, 0.1\}.
For \textsc{LeNet}, we use constant learning rate 0.05 which is tuned from \{0.5, 0.1, 0.05, 0.01\} for centralized algorithms and batch size 128 on each worker.

\xr{{\bf Metrics.} In this paper, we mainly focus on the convergence rate of
  different algorithms instead of the wall clock speed. This is because the
  implementation of D$^2$ is a minor change over the standard D-PSGD algorithm,
  and thus they has almost the same speed to finish one epoch of training, and
  both are no slower than the centralized algorithm. When the network has high
  latency, if a decentralized algorithm ($D^{2}$ or D-PSGD) converges with a
  similar speed as the centralized algorithm, it can be up to one order of
  magnitude faster \cite{Lian_dsgd}. However, the convergence rate depending on
  the ``outer variance'' is different for both algorithms. }
\subsection{Unshuffled Case}
\vspace{-3mm} We are mostly interested in the unshuffled case, in which the data
variation across workers is maximized. Figure~\ref{fig:unshuffled} shows the
result. \xr{In the unshuffled case,} we see that the 
D-PSGD algorithm convergences slower than the centralized case. This is
consistent with the original D-PSGD paper \citep{Lian_dsgd}. On the other hand,
D$^2$ converges much faster than D-PSGD, and achieves almost the same loss as
the centralized algorithm. For the LeNet case, each worker only has access to
data of assigned two labels, which means the data variation is very large. The
D-PSGD does not converge with the the given learning rate 0.05.\footnote{We can
  tune the learning rate 50x smaller for D-PSGD to converge in this case, but
  doing so will make D-PSGD stuck at the starting point for quite a long time.}

\begin{figure*}[!htb]
\vspace{-4mm}
\centering
\includegraphics[width=0.85\textwidth]{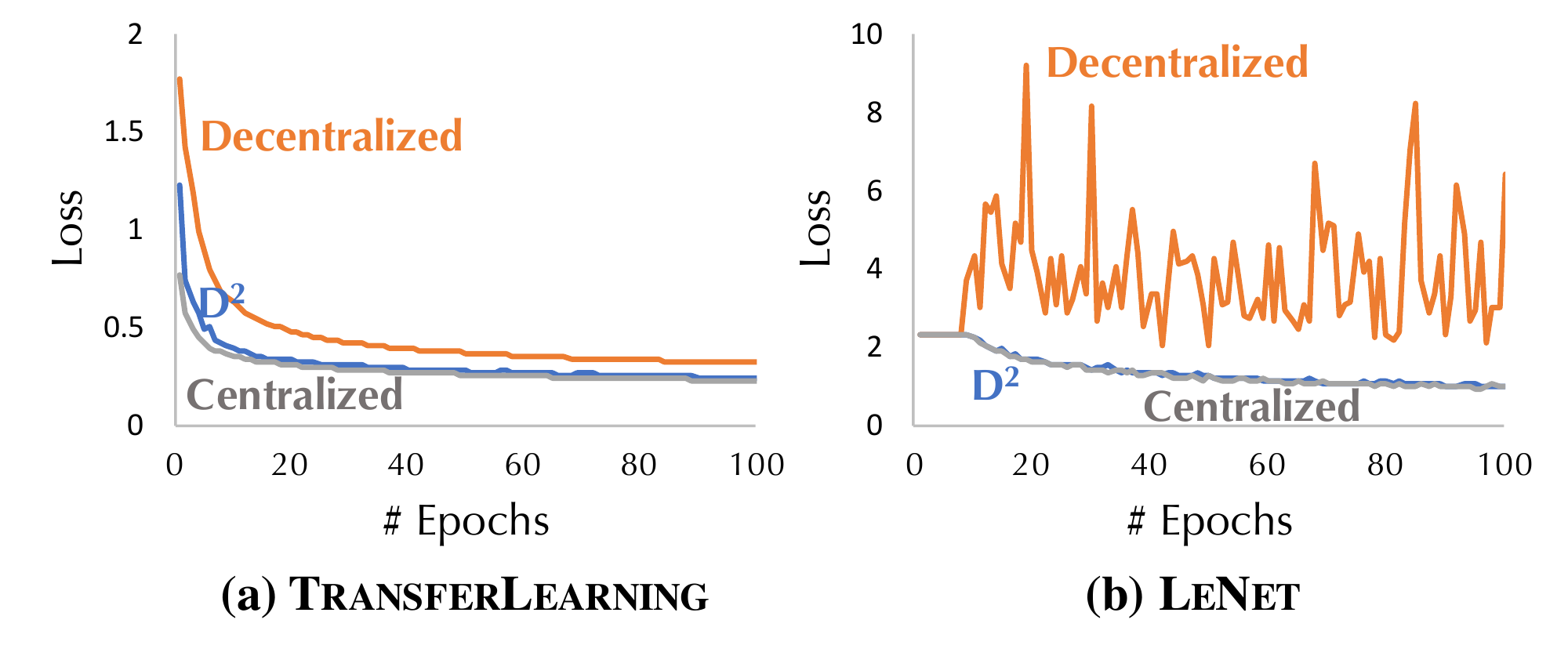}
\vspace{-6mm}
\caption{Convergence of Different Distributed Training Algorithms (Unshuffled Case).}
\label{fig:unshuffled}
\end{figure*}
\begin{figure*}[!htb]
\vspace{-4mm}
\centering
\includegraphics[width=0.85\textwidth]{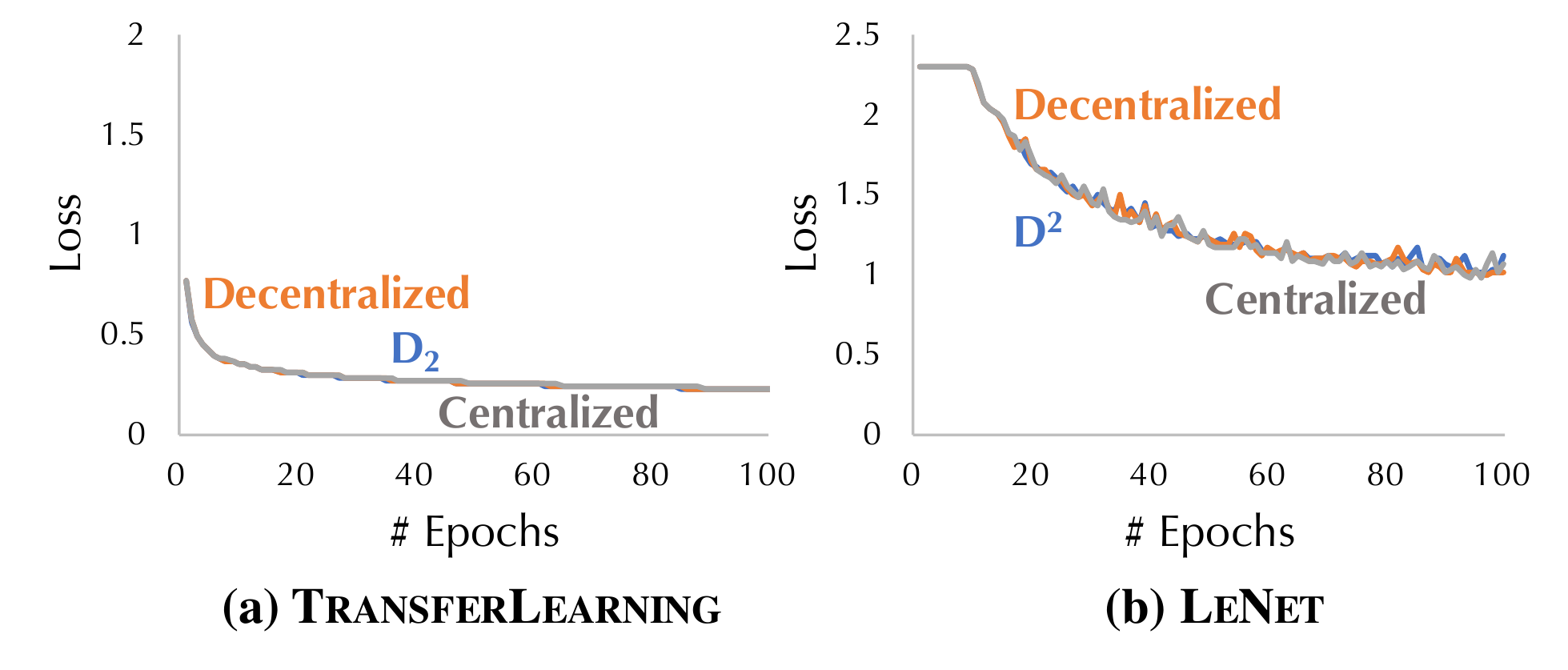}
\vspace{-6mm}
\caption{Convergence of Different Distributed Training Algorithms (Shuffled Case).}
\label{fig:shuffled}
\end{figure*}

\subsection{Shuffled Case}
\vspace{-3mm}

As a sanity check, Figure~\ref{fig:shuffled} shows the result of three different
algorithms on the shuffled data. In this case, the data variation of among
workers is small (in expectation, they are drawn from the same distribution). We
see that, all strategies have similar convergence rate. This validate that the
D$^2$ algorithm is more effective for larger data variation between different
workers.

\section{Conclusion}
\label{sec:conclusion}

In this paper, we propose a decentralized algorithm, namely, D$^2$ algorithm.
D$^2$ algorithm integrates the D-PSGD algorithm \xr{with} the variance reduction
technology, \xr{by which we improves the convergence rate of D-PSGD.} The
variance reduction technology used in this paper is different from the commonly
used ones such as SVRG and SAGA, that are designed for centralized approaches.
Experiments validate the advantage of D$^2$ over D-PSGD --- D$^2$ converges with
a rate that is similar to centralized SGD while D-PSGD does not converge to the
\xr{a solution with a similar quality} when the data variance is large.
\xr{While being robust to large data variance among workers}, the same
performance benefit of D-PSGD over the centralized strategy still holds for
D$^2$.

\bibliographystyle{abbrvnat}
\bibliography{vsgd}
%
%
%
%
%
%
%
%
\newpage
\onecolumn
\begin{center}
{\bf \Huge Supplemental Materials}
\end{center}

This supplement material includes the proofs for Theorem~\ref{theo_1}.

Because the confusion matrix $W$ is symmetric, it can be decomposed as $W = P\Lambda P^{\top}$, where $P = \left(\bm{v}_1,\bm{v}_2,\cdots,\bm{v}_n\right)$ is an orthogonal matrix, i.e., $P^{\top}P=PP^{\top}=I$, and $\Lambda=\textnormal{diag}\{\lambda_1,\dots,\lambda_n\}$ is a diagonal matrix with diagonal entries being the eigenvalues of $W$ in the nonincreasing order.
Then applying the decomposition to the iteration (from $W_{t}$ and $W_{t-1}$ to $W_{t+1}$)
\begin{align*}
X_{t+1} =  2X_tW-X_{t-1}W-\gamma G(X_t;\xi_t)W + \gamma G(X_{t-1};\xi_{t-1})W
\end{align*}
gives
\begin{align*}
X_{t+1} = & 2X_tP\Lambda P^{\top}-X_{t-1}P\Lambda P^{\top}-\gamma G(X_t;\xi_t)P\Lambda P^{\top} + \gamma G(X_{t-1};\xi_{t-1})P\Lambda P^{\top}.
\end{align*}
Denote $Y_t = X_tP$, $H(X_t;\xi_t) = G(X_t;\xi_t)P$, and use $y_t^{(i)}$ and $h_t^{(i)}$ to indicate the $i$-th column of $Y_t$ and $H(X_t;\xi_t)$, respectively.
Then
\begin{align*}
Y_{t+1} = & 2Y_t\Lambda - Y_{t-1}\Lambda - \gamma H(X_t;\xi_t)\Lambda + \gamma H(X_{t-1};\xi_{t-1})\Lambda,\label{Y_t}\numberthis
\end{align*}
or in the columns of $Y_t$ and $H(X_t;\xi_t)$,
\begin{align*}
y_{t+1}^{(i)} = & \lambda_i(2y_{t}^{(i)} - y_{t-1}^{(i)} - \gamma h_t^{(i)} + \gamma h_{t-1}^{(i)}).\label{y_t}\numberthis
\end{align*}

From the properties of $W$ in Assumption~\ref{ass:global} and the decomposition, we have $\lambda_1= 1$ and $\bm{v}_1 = \frac{1}{\sqrt{n}}(1,1,\cdots,1)^\top$. Therefore $y_t^{(1)} = \overline{X}_t\sqrt{n}$. For all other eigenvalues $-\frac{1}{3} < \lambda_i < 1$, the equation~\eqref{y_t} shows that all $y_t^{(i)}$ would ``decay to zero'', which explains how the confusion matrix works.

\begin{lemma}\label{lemma_rho_at}
Given two non-negative sequences $\{a_t\}_{t=1}^{\infty}$ and $\{b_t\}_{t=1}^{\infty}$ that satisfying
\begin{equation}
a_t =  \sum_{s=1}^t\rho^{t-s}b_{s}, \numberthis \label{eqn1}
\end{equation}
with $\rho\in[0,1)$, we have
\begin{align*}
S_k:=\sum_{t=1}^{k}a_t   \leq & \sum_{s=1}^k\frac{b_s}{1-\rho},\\
D_k:=\sum_{t=1}^{k}a_t^2 \leq & 
\frac{1}{(1-\rho)^2} \sum_{s=1}^kb_s^2.
\end{align*}
\end{lemma}

\begin{proof}
\begin{align}
S_k:=\sum_{t=1}^{k}a_t  = & \sum_{t=1}^{k}\sum_{s=1}^t\rho^{t-s}b_{s} = \sum_{s=1}^{k}\sum_{t=s}^k\rho^{t-s}b_{s} = \sum_{s=1}^{k}\sum_{t=0}^{k-s}\rho^{t}b_{s}\leq  \sum_{s=1}^{k}{b_{s}\over 1-\rho}. \label{eqn3}\\
D_k:=\sum_{t=1}^{k}a_t^2  = & \sum_{t=1}^{k}\sum_{s=1}^t\rho^{t-s}b_{s}\sum_{r=1}^t\rho^{t-r}b_{r} =  \sum_{t=1}^{k}\sum_{s=1}^t\sum_{r=1}^t\rho^{2t-s-r}b_{s}b_{r} \nonumber\\
\leq &  \sum_{t=1}^{k}\sum_{s=1}^t\sum_{r=1}^t\rho^{2t-s-r}{b_{s}^2+b_{r}^2\over2} =   \sum_{t=1}^{k}\sum_{s=1}^t\sum_{r=1}^t\rho^{2t-s-r}b_{s}^2 \nonumber\\
\leq  & {1\over 1-\rho}\sum_{t=1}^{k}\sum_{s=1}^t\rho^{t-s}b_{s}^2  \leq {1\over (1-\rho)^2}\sum_{s=1}^{k}b_{s}^2
\end{align}
where the last inequality holds because of~\eqref{eqn3}.
\end{proof}

\begin{lemma}\label{lemma_bound_transformation}
For any matrix $X_t\in \mathbb{R}^{N\times n}$,
we have
\begin{align*}
\sum_{i=2}^n\left\| X_t\bm{v}^{(i)} \right\|^2 \leq & \sum_{i=1}^n\left\| X_t\bm{v}^{(i)} \right\|^2 =
\left\| X_t\right\|^2_F\\
\sum_{i=1}^n\left\| X_tP^{\top}\bm{e}^{(i)} \right\|^2 = & \left\| X_tP^{\top}\right\|^2_F = \left\| X_t\right\|^2_F
\end{align*}
where $\bm{e}^{(i)}\in \mathbb{R}^{n\times 1}$ with the $i$-th component being 1 and all others being 0.
\end{lemma}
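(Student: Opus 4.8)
The plan is to reduce both displayed identities to a single principle: right-multiplying a matrix by an orthogonal matrix preserves its Frobenius norm, combined with the elementary fact that $\|M\|_F^2$ equals the sum of the squared $\ell_2$-norms of the columns of $M$. Here $P = (\bm{v}_1,\dots,\bm{v}_n)$ is orthogonal by Assumption~\ref{ass:global} (the symmetry of $W$), so $P^\top P = P P^\top = I$, and I take $\bm{v}^{(i)} = \bm{v}_i$ to be its $i$-th column.

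The key observation I would make first is that $X_t \bm{v}^{(i)} = X_t P \bm{e}^{(i)}$ is exactly the $i$-th column of $X_t P$, and likewise $X_t P^\top \bm{e}^{(i)}$ is the $i$-th column of $X_t P^\top$. By the column-norm characterization of the Frobenius norm this gives $\sum_{i=1}^n \|X_t \bm{v}^{(i)}\|^2 = \|X_t P\|_F^2$ and $\sum_{i=1}^n \|X_t P^\top \bm{e}^{(i)}\|^2 = \|X_t P^\top\|_F^2$, so the two lines of the lemma reduce to the single claim that multiplying $X_t$ on the right by $P$ or by $P^\top$ does not change the Frobenius norm.

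To close that claim I would expand $\|X_t P\|_F^2 = \operatorname{tr}(P^\top X_t^\top X_t P) = \operatorname{tr}(X_t^\top X_t P P^\top) = \operatorname{tr}(X_t^\top X_t) = \|X_t\|_F^2$, using the cyclic property of the trace and $P P^\top = I$; the identical computation with $P^\top$ in place of $P$ handles the second line. Finally, the inequality $\sum_{i=2}^n \|X_t \bm{v}^{(i)}\|^2 \le \sum_{i=1}^n \|X_t \bm{v}^{(i)}\|^2$ is immediate, since it merely drops the nonnegative $i=1$ summand $\|X_t \bm{v}^{(1)}\|^2$.

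Since every step is a one-line trace manipulation, I do not anticipate a genuine obstacle here; this lemma is a bookkeeping tool that lets later arguments move freely between the original coordinates $X_t$ and the transformed coordinates $Y_t = X_t P$. The only points requiring care are keeping the vector $\ell_2$-norm and the matrix Frobenius norm notationally distinct (they are written $\|\cdot\|$ and $\|\cdot\|_F$ respectively) and confirming that the orthogonality of $P$ --- hence $P P^\top = P^\top P = I$ --- is indeed supplied by the symmetry assumption on $W$.
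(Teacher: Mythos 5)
Your proposal is correct and follows essentially the same route as the paper's own proof: both identify $X_t\bm{v}^{(i)}$ and $X_tP^{\top}\bm{e}^{(i)}$ as columns of $X_tP$ and $X_tP^{\top}$, invoke the trace identity with $PP^{\top}=I$ to show right-multiplication by the orthogonal $P$ (or $P^{\top}$) preserves the Frobenius norm, and obtain the inequality by dropping the nonnegative $i=1$ term. No gaps.
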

\begin{proof}
From the definition of the Frobenius norm for a matrix, we have
\begin{align*}
\sum_{i=1}^n\left\| X_t\bm{v}^{(i)} \right\|^2 = & \|X_tP\|_F^2 =  Tr\left(X_tPP^{\top}X_t^{\top}\right)=  Tr\left(X_tX_t^{\top}\right) =  \left\| X_t\right\|^2_F.
\end{align*}
Since $\left\| X_t\bm{v}^{(1)} \right\|^2\geq 0$, so
\begin{align*}
\sum_{i=2}^n\left\| X_t\bm{v}^{(i)} \right\|^2
\leq \sum_{i=1}^n\left\| X_t\bm{v}^{(i)} \right\|^2 = \sum_{i=1}^n\left\| X_t\right\|^2_F.
\end{align*}
In the same way, we have
\begin{align*}
\sum_{i=1}^n\left\| X_tP^{\top}\bm{e}^{(i)} \right\|^2 =  \left\| X_tP^{\top}\right\|^2_F = \left\| X_t\right\|^2_F.
\end{align*}
The result is proved.
\end{proof}

\begin{lemma}\label{lemma_a_and_2a}
Given $\rho \in (-{1\over3},0)\bigcup (0,1)$, for any two sequence $\{a_t\}_{t=0}^\infty$ and $\{b_t\}_{t=0}^\infty$ that satisfy
\begin{align*}
a_0 = & b_0 = 0, \\
a_1 = & b_1,\\
a_{t+1} = & \rho(2a_t -a_{t-1}) + b_t - b_{t-1}, \quad\forall  t\geq 1,
\end{align*}
we have
\begin{align*}
a_{t+1} = a_1\left(\frac{u^{t+1}-v^{t+1}}{u-v}\right) + \sum_{s=1}^t\beta_s\frac{u^{t-s+1}-v^{t-s+1}}{u-v},\quad \forall t\geq 0,
\end{align*}
where
\begin{align*}
\beta_s =  b_s - b_{s-1},\quad u =  \rho + \sqrt{\rho^2 - \rho},\quad v =  \rho - \sqrt{\rho^2 - \rho}.
\end{align*}
More specifically, if $0< \rho < 1$, we have
\begin{align*}
a_{t+1}\sin{\theta} = a_1\rho^{t/2}\sin{[(t+1)\theta]} + \sum_{s=1}^t\beta_s\rho^{(t-s)/2}\sin{[(t+1-s)\theta]},\quad\forall t\geq 0
\end{align*}
where
\begin{align*}
\beta_s = & b_s - b_{s-1},\quad \theta=\arccos(\sqrt{\rho}).
\end{align*}
\end{lemma}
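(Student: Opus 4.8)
The plan is to recognize the recursion as an inhomogeneous second-order linear recurrence with constant coefficients and to solve it through its fundamental (Green's) solution. Setting $\beta_t := b_t - b_{t-1}$, the defining relation reads $a_{t+1} = 2\rho a_t - \rho a_{t-1} + \beta_t$ for $t \ge 1$, whose associated characteristic equation is $x^2 - 2\rho x + \rho = 0$. Its two roots are exactly $u = \rho + \sqrt{\rho^2-\rho}$ and $v = \rho - \sqrt{\rho^2-\rho}$, and since $\rho^2 - \rho = \rho(\rho-1) \ne 0$ on the admissible domain $\rho \in (-\tfrac13,0)\cup(0,1)$, we always have $u \ne v$, so the quantity $g_k := \frac{u^k - v^k}{u-v}$ is well defined.

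The first step is to record the elementary properties of $g_k$. Directly from the definition, $g_0 = 0$ and $g_1 = 1$, and because $u,v$ are the two roots of the characteristic polynomial (so $u+v = 2\rho$ and $uv = \rho$), the sequence $g_k$ obeys the homogeneous recurrence $g_{k+1} = 2\rho g_k - \rho g_{k-1}$. I would then prove the claimed identity $a_{t+1} = a_1 g_{t+1} + \sum_{s=1}^t \beta_s g_{t-s+1}$ by induction on $t$. The base cases $t=0$ and $t=1$ follow from $g_1 = 1$, $g_2 = u+v = 2\rho$, together with $a_0 = 0$ and $a_1 = b_1 = \beta_1$. For the inductive step, I substitute the formulas for $a_{t+1}$ and $a_t$ into $a_{t+2} = 2\rho a_{t+1} - \rho a_t + \beta_{t+1}$: the leading term combines via $g_{t+2} = 2\rho g_{t+1} - \rho g_t$, the bulk of the convolution sum collapses term-by-term using the same homogeneous recurrence for $g$, and the boundary contributions reassemble into the $s=t$ and $s=t+1$ summands using $g_2 = 2\rho$ and $g_1 = 1$, yielding exactly $a_{t+2} = a_1 g_{t+2} + \sum_{s=1}^{t+1}\beta_s g_{t-s+2}$.

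Finally, for the specialization to $0 < \rho < 1$ I would pass to the trigonometric form. Here $\rho^2 - \rho < 0$, so $u$ and $v$ are complex conjugates; writing them in polar form gives $|u|^2 = \rho^2 + (\rho - \rho^2) = \rho$, hence $u = \sqrt{\rho}\,e^{i\theta}$ and $v = \sqrt{\rho}\,e^{-i\theta}$ with $\cos\theta = \rho/\sqrt\rho = \sqrt\rho$, i.e.\ $\theta = \arccos(\sqrt\rho)$. Euler's formula then turns the fundamental solution into
\begin{align*}
g_k = \frac{u^k - v^k}{u-v} = \frac{\rho^{k/2}\,2i\sin(k\theta)}{\sqrt\rho\,2i\sin\theta} = \rho^{(k-1)/2}\,\frac{\sin(k\theta)}{\sin\theta}.
\end{align*}
Substituting this expression for $g_{t+1}$ and $g_{t-s+1}$ into the general identity and multiplying through by $\sin\theta$ reproduces the stated formula $a_{t+1}\sin\theta = a_1 \rho^{t/2}\sin[(t+1)\theta] + \sum_{s=1}^t \beta_s \rho^{(t-s)/2}\sin[(t-s+1)\theta]$.

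I expect the only genuinely delicate points to be bookkeeping-related: correctly splitting off the boundary indices in the convolution sum during the inductive step, so that the $s=t$ and $s=t+1$ terms materialize with the right coefficients $g_2$ and $g_1$, and handling the complex-root regime $0<\rho<1$ carefully enough to confirm that $g_k$ remains real-valued before rewriting it trigonometrically. Neither presents a conceptual difficulty once the characteristic roots $u,v$ are identified.
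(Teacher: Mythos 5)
Your proposal is correct. It rests on the same two pillars as the paper's proof---identifying $u=\rho+\sqrt{\rho^2-\rho}$ and $v=\rho-\sqrt{\rho^2-\rho}$ as the roots of the characteristic polynomial $x^2-2\rho x+\rho$, and passing to the polar form $u=\sqrt{\rho}e^{i\theta}$, $v=\sqrt{\rho}e^{-i\theta}$ with $\theta=\arccos(\sqrt{\rho})$ when $0<\rho<1$---but the mechanism by which you establish the closed form is genuinely different. The paper \emph{derives} the formula constructively: it factors the second-order recurrence into the first-order relation $a_{t+1}-ua_t=(a_t-ua_{t-1})v+\beta_t$, unrolls that to get $a_{t+1}-ua_t=a_1v^t+\sum_{s=1}^t\beta_sv^{t-s}$, then divides by $u^{t+1}$, telescopes, and resums the resulting geometric series. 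You instead \emph{verify} the formula by induction, after isolating the fundamental solution $g_k=\frac{u^k-v^k}{u-v}$ and noting $g_0=0$, $g_1=1$, $g_{k+1}=2\rho g_k-\rho g_{k-1}$; the inductive step then reduces to linearity plus the homogeneous recurrence for $g$, with the boundary terms absorbed via $g_2=2\rho$ and $g_1=1$ (your base-case check $a_2=2\rho a_1+\beta_1=a_1g_2+\beta_1g_1$ is right, using $b_0=0$). Your route is arguably cleaner: it sidesteps the paper's division by $u^{t+1}$ and the geometric-series bookkeeping in $\left(\tfrac{v}{u}\right)^k$, at the cost of having to guess the answer in advance---which here is supplied by the lemma statement. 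Both arguments need $u\neq v$, which holds since $\rho(\rho-1)\neq 0$ on the admissible range, and your observation that $g_k$ is real (being symmetric in the conjugate pair $u,\bar u$) before the trigonometric rewriting is a point the paper glosses over.
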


\begin{proof}
When $t=0$, the results is easy to verify.
Next we consider the case $t\geq1$.
Since
\begin{align*}
a_{t+1} =  2\rho a_t - \rho a_{t-1} + \beta_t,
\end{align*}
We can find
\begin{align*}
u =  \rho + \sqrt{\rho^2 - \rho}, \quad  v =  \rho - \sqrt{\rho^2 - \rho},
\end{align*}
such that
\begin{align*}
a_{t+1} - ua_t = & (a_t - ua_{t-1})v + \beta_t.\numberthis \label{lemma_an_a-ua} \\
\end{align*}
Note that $u$ and $v$ are complex numbers when $0<\rho<1$. That is
\begin{align*}
u= \sqrt{\rho}e^{i\theta},\quad v= \sqrt{\rho}e^{-i\theta},
\end{align*}
with $\theta=\arccos{(\sqrt{\rho})}$.



Recursively applying~\eqref{lemma_an_a-ua} gives
\begin{align*}
a_{t+1} - ua_t = & (a_t - ua_{t-1})v + \beta_t =   (a_{t-1} - ua_{t-2})v^2 + \beta_{t-1}v + \beta_t \\
=  & (a_1 - ua_0)v^t + \sum_{s=1}^t\beta_{s}v^{t-s} \\
=  & a_1v^t + \sum_{s=1}^t\beta_{s}v^{t-s}. \quad\text{(due to $a_0=0$)}
\end{align*}
Diving both sides by $u^{t+1}$, we obtain
\begin{align*}
\frac{a_{t+1}}{u^{t+1}} = & \frac{a_t}{u^t}  + u^{-(t+1)}\left(a_1v^t + \sum_{s=1}^t\beta_sv^{t-s}\right) \\
= &\frac{a_{t-1}}{u^{t-1}}  + u^{-t}\left(a_1v^{t-1} + \sum_{s=1}^{t-1}\beta_sv^{t-1-s}\right) + u^{-(t+1)}\left(a_1v^t + \sum_{s=1}^t\beta_sv^{t-s}\right)\\
= & \frac{a_1}{u} + \sum_{k=1}^tu^{-k-1}\left(a_1v^k + \sum_{s=1}^k\beta_sv^{k-s}\right)
\end{align*}
Then we multiply both sides by $u^{t+1}$ and have
\begin{align*}
a_{t+1} = & a_1u^t + \sum_{k=1}^tu^{t-k}\left(a_1v^k + \sum_{s=1}^k\beta_sv^{k-s}\right)\\
= & a_1u^t\left(1 + \sum_{k=1}^t\left(\frac{v}{u}\right)^k\right) +  u^t\sum_{k=1}^t\sum_{s=1}^k\beta_sv^{-s}\left(\frac{v}{u}\right)^{k}\\
= & a_1u^t\sum_{k=0}^t\left(\frac{v}{u}\right)^k+  u^t\sum_{s=1}^t\sum_{k=s}^t\beta_sv^{-s}\left(\frac{v}{u}\right)^{k}\quad \text{(due to $\sum_{k=1}^t\sum_{s=1}^ka_sb_k = \sum_{s=1}^t\sum_{k=s}^ta_sb_k$)} \\
= & a_1u^t\left(\frac{1-\left(\frac{v}{u}\right)^{t+1}}{1-\frac{v}{u}}\right) +  u^t\sum_{s=1}^t\beta_sv^{-s}\left(\frac{v}{u}\right)^{s}\frac{1-\left(\frac{v}{u}\right)^{t-s+1}}{1-\frac{v}{u}}\\
= & a_1\left(\frac{u^{t+1}-v^{t+1}}{u-v}\right) + \sum_{s=1}^t\beta_s\frac{u^{t-s+1}-v^{t-s+1}}{u-v}.
\end{align*}
When $\rho \in (0,1)$,
since $u = \sqrt{\rho} e^{i\theta}$ and $v = \sqrt{\rho} e^{-i\theta}$, we have
\begin{align*}
a_{t+1} = a_1\rho^{t/2}\frac{\sin{[(t+1)\theta]}}{\sin{\theta}} + \sum_{s=1}^t\beta_s\rho^{(t-s)/2}\frac{\sin{[(t-s+1)\theta]}}{\sin{\theta}}.
\end{align*}
The result is proved.
\end{proof}

\begin{lemma} \label{lemma_bound_x_ave}
Under Assumption~\ref{ass:global}, we have
\begin{align*}
& \left(1-24C_2\gamma^2L^2\right)\sum_{i=1}^{n}\sum_{t=0}^{T}\left\|\overline{X}_t-\bm{x}_t^{(i)}\right\|^2\\
& \leq 2C_1\|X_1\|^2_F + 12C_2\gamma^2n\sigma^2T + 6C_2\gamma^4L^2\sigma^2T + 6C_2\gamma^4L^2n \sum_{t=1}^{T-1}\left\|\overline{\nabla f}(X_t)\right\|^2,
\end{align*}
where $\gamma$, $L$, $\sigma$, $\theta$, $C_1$ and $C_2$ are defined in Theorem~\ref{theo_1}.
\end{lemma}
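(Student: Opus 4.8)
The plan is to pass to the eigenbasis of $W$, where the per-direction recursion \eqref{y_t} decouples, solve it in closed form with Lemma~\ref{lemma_a_and_2a}, and then sum the squares over time. The first step is to rewrite the consensus error in that basis. Since $\bm{v}^{(1)} = \bm{1}/\sqrt{n}$ and $y_t^{(1)} = \sqrt{n}\,\overline{X}_t$, the deviation $X_t - \overline{X}_t\bm{1}^{\top}$ is exactly the projection of $X_t$ onto $\mathrm{span}\{\bm{v}^{(2)},\dots,\bm{v}^{(n)}\}$, so by the orthogonality of $P$ (cf. Lemma~\ref{lemma_bound_transformation}) one has $\sum_{i=1}^n\|\overline{X}_t - \bm{x}_t^{(i)}\|^2 = \sum_{i=2}^n\|y_t^{(i)}\|^2$. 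It therefore suffices to bound $\sum_{t}\sum_{i=2}^n\|y_t^{(i)}\|^2$.

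Next I would treat each $i\ge 2$ separately. Each scalar coordinate of $y_t^{(i)}$ obeys the recursion of Lemma~\ref{lemma_a_and_2a} with $\rho = \lambda_i$ and forcing $\beta_s = \lambda_i\gamma\big(h_{s-1}^{(i)} - h_s^{(i)}\big)$ (the case $\lambda_i=0$ being trivial), so the lemma expresses $y_{t+1}^{(i)}$ as a convolution of the initial term $y_1^{(i)}$ and the forcing against a kernel that decays geometrically: like $\lambda_i^{t/2}$ when $\lambda_i\in(0,1)$ (through the sine representation, where $\sin^2\theta = 1-\lambda_i$), and like $|v|^{t}$ when $\lambda_i\in(-\tfrac13,0)$, with $v = \lambda_i-\sqrt{\lambda_i^2-\lambda_i}$ real and $|v|<1$ precisely because $\lambda_n>-\tfrac13$. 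Squaring, summing over $t$ and over the $N$ coordinates, and invoking Lemma~\ref{lemma_rho_at} to sum the geometric kernels converts the $y_1$ contribution into the factors $\frac{1}{(1-\lambda)^2}$ and $\frac{1}{1-|v|^2}$, and the forcing contribution into $\frac{\lambda^2}{(1-\sqrt{\lambda})^2(1-\lambda)}$ and $\frac{\lambda_n^2}{1-|v|^2}$; taking the maximum over the two eigenvalue regimes assembles exactly $C_1$ and $C_2$. Summing over $i\ge2$ and using $\sum_i\|y_1^{(i)}\|^2\le\|X_1\|_F^2$ together with $\sum_i\|h_{s-1}^{(i)}-h_s^{(i)}\|^2 = \|G_{s-1}-G_s\|_F^2$ yields an estimate of the shape $\sum_t\sum_{i\ge2}\|y_t^{(i)}\|^2 \lesssim C_1\|X_1\|_F^2 + C_2\gamma^2\sum_s\mathbb{E}\|G_s-G_{s-1}\|_F^2$.

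The main obstacle is the final step: controlling $\sum_s\mathbb{E}\|G_s-G_{s-1}\|_F^2$, which is where the variance-reduction structure of \ouralg is used and where the consensus error re-enters. I would split $\nabla F_i(\bm{x}_s^{(i)};\xi_s^{(i)}) - \nabla F_i(\bm{x}_{s-1}^{(i)};\xi_{s-1}^{(i)})$ into two zero-mean stochastic pieces, each with second moment at most $\sigma^2$, plus the deterministic difference $\nabla f_i(\bm{x}_s^{(i)}) - \nabla f_i(\bm{x}_{s-1}^{(i)})$, which the $L$-Lipschitz assumption bounds by $L\|\bm{x}_s^{(i)}-\bm{x}_{s-1}^{(i)}\|$. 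Decomposing $\bm{x}_s^{(i)}-\bm{x}_{s-1}^{(i)} = (\bm{x}_s^{(i)}-\overline{X}_s) - (\bm{x}_{s-1}^{(i)}-\overline{X}_{s-1}) + (\overline{X}_s-\overline{X}_{s-1})$ and substituting $\overline{X}_s-\overline{X}_{s-1} = -\gamma\overline{G}(X_{s-1};\xi_{s-1})$ from \eqref{global_mean_evolution}, with $\mathbb{E}\|\overline{G}\|^2 \le \|\overline{\nabla f}(X_{s-1})\|^2 + \sigma^2/n$, produces four contributions: the noise terms ($\to 12C_2\gamma^2 n\sigma^2 T$), the averaged-noise term whose $1/n$ cancels the factor $n$ from summing over workers ($\to 6C_2\gamma^4L^2\sigma^2 T$), the averaged-gradient term ($\to 6C_2\gamma^4L^2 n\sum_t\|\overline{\nabla f}(X_t)\|^2$), and crucially the consensus terms $\|\bm{x}_s^{(i)}-\overline{X}_s\|^2$ and $\|\bm{x}_{s-1}^{(i)}-\overline{X}_{s-1}\|^2$, which reproduce the very quantity being bounded.

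Collecting these, the self-referential consensus term appears on both sides with weight proportional to $C_2\gamma^2L^2$; moving it to the left produces the factor $1-24C_2\gamma^2L^2$ and, after a $\|X_1\|_F^2$ term is carried over from the boundary iterate, gives the stated inequality. The two delicate points I expect are honest bookkeeping of the constants through the complex-root and real-root regimes (so that the maxima genuinely collapse into $C_1$ and $C_2$), and verifying that the stepsize restriction $1-24C_2\gamma^2L^2>0$ is exactly what allows the self-referential consensus term to be absorbed into the left-hand side.
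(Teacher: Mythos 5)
Your proposal follows essentially the same route as the paper's proof: diagonalize $W$, reduce to the per-mode second-order recursion solved via Lemma~\ref{lemma_a_and_2a} (complex roots and the sine form for $\lambda_i\in(0,1)$, real roots with $|v|<1$ for $\lambda_i\in(-\tfrac13,0)$), sum geometric kernels with Lemma~\ref{lemma_rho_at} to assemble $C_1$ and $C_2$, bound $\mathbb{E}\|G_t-G_{t-1}\|_F^2$ by two $\sigma^2$ noise terms plus the Lipschitz difference, split the iterate difference into the mean motion $-\gamma\overline{G}$ and the self-referential consensus deviations, and absorb the latter into the left-hand side to produce the factor $1-24C_2\gamma^2L^2$. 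The only cosmetic difference is that you perform the final split in the original coordinates while the paper separates the $i=1$ mode from the $i\ge 2$ modes of $\|Y_t-Y_{t-1}\|_F^2$, which is the same decomposition.
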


\begin{proof}
To estimate the difference of the local models and the global mean model, we have
\begin{align*}
\sum_{i=1}^n\left\|\overline{X}_t-\bm{x}_t^{(i)}\right\|^2= &
\sum_{i=1}^n\left\| X_t\bm{e}^{(i)}-X_t\frac{\bm{1}_n}{n} \right\|^2 = \|X_t-X_t{\bm{1}_n\bm{1}_n^\top\over n}\|_F^2 = \|X_tPP^\top-X_t\bm{v}_1\bm{v}_1^\top\|_F^2\\
 = & \left\| X_tP\begin{pmatrix}
&0,&0,&0,&\cdots,&0\\
&0,&1,&0,&\cdots,&0\\
&0,&0,&1,&\cdots,&0\\
& \hdotsfor{5}\\
&0,&0,&0,&\cdots,&1
\end{pmatrix}\right\|^2_F\\
 =  &\sum_{i=2}^n \left\|y_t^{(i)}\right\|^2,\numberthis \label{lemma_bound_substract_mean_eq0}
\end{align*}
where $y_t^{(i)}$ is the $i$-th column of $X_tP$.
Note that we have, from~\eqref{y_t},
$$y_{t+1}^{(i)} =  \lambda_i(2y_{t}^{(i)} - y_{t-1}^{(i)} - \gamma h_t^{(i)} + \gamma h_{t-1}^{(i)})=\lambda_i(2y_{t}^{(i)} - y_{t-1}^{(i)}) + \lambda_i\beta_t^{(i)},$$
where $\beta_t^{(i)} = -\gamma h_{t}^{(i)}+\gamma h_{t-1}^{(i)}$.
For all $y^{(i)}$ that corresponding to $-\frac{1}{3} < \lambda_i<0$, Lemma~\ref{lemma_a_and_2a} shows
\begin{align*}
y_{t+1}^{(i)} = & y_{1}^{(i)}\left(\frac{u^{t+1}_i-v_i^{t+1}}{u_i-v_i}\right) + \lambda_i\sum_{s=1}^t\beta_s^{(i)}\frac{u_i^{t-s+1}-v_i^{t-s+1}}{u_i-v_i},
\end{align*}
where $u_i = \lambda_i + \sqrt{\lambda^2_i - \lambda_i}$ and $v_i = \lambda_i - \sqrt{\lambda^2_i - \lambda_i}$.
Therefore, we have
\begin{align*}
\left\| y_{t+1}^{(i)} \right\|^2 \leq & 2\left\| y_{1}^{(i)} \right\|^2\left(\frac{u_i^{t+1}-v_i^{t+1}}{u_i-v_i}\right)^2 + 2\lambda_i^2\left( \sum_{s=1}^t\left\|\beta_s^{(i)}\right\| \left|\frac{u_i^{t-s+1}-v_i^{t-s+1}}{u_i-v_i}\right| \right)^2.\numberthis \label{lemma_bound_x_ave_lambda_neg_1}
\end{align*}

For $\left|\frac{u_i^{n+1}-v_i^{n+1}}{u_i-v_i}\right|$, we have
\begin{align*}
\left|\frac{u_i^{n+1}-v_i^{n+1}}{u_i-v_i}\right| \leq  |v_i|^{n} \left|\frac{u_i\left(\frac{u_i}{v_i}\right)^{n} - v_i}{u_i-v_i}\right| \leq  |v_i|^n \quad \text{(due to $|u_i|<|v_i|$)}.
\end{align*}
Using~\eqref{lemma_bound_x_ave_lambda_neg_1}, we obtain
\begin{align*}
\left\| y_{t+1}^{(i)} \right\|^2 \leq & 2\left\| y_{1}^{(i)} \right\| |v_i|^{2t} +  2\lambda_i^2\left( \sum_{s=1}^t\left\|\beta_s^{(i)}\right\| |v_i|^{t-s}\right)^2.
\end{align*}
Summing from $t=0$ to $t=T-1$ gives
\begin{align*}
\sum_{t=0}^{T-1}\left\| y_{t+1}^{(i)} \right\|^2=\sum_{t=1}^T\left\| y_{t}^{(i)} \right\|^2 \leq 2\left\| y_{1}^{(i)} \right\|\sum_{t=0}^{T-1}|v_i|^{2t} + 2\lambda_i^2\sum_{t=1}^{T-1}\left( \sum_{s=1}^t\left\|\beta_s^{(i)}\right\| |v_i|^{t-s}\right)^2.
\end{align*}
Denote $a_t = \sum_{s=1}^t\left\|\beta_s^{(i)}\right\| |v_i|^{t-s}$, which has the same structure as the sequence in Lemma~\ref{lemma_rho_at}. Therefore, when $\lambda_i < 0$, we have
\begin{align*}
\sum_{t=1}^T\left\| y_{t}^{(i)} \right\|^2 \leq & \frac{2\left\| y_{1}^{(i)} \right\|}{1-|v_i|^2} + \frac{2\lambda_i^2}{(1-|v_i|)^2}\sum_{t=1}^{T-1}\left\|\beta_t^{(i)}\right\|^2\\
\leq & \frac{2\left\| y_{1}^{(i)} \right\|}{1-|v|^2} + \frac{2\lambda_n^2}{(1-|v|)^2}\sum_{t=1}^{T-1}\left\|\beta_t^{(i)}\right\|^2,\numberthis \label{lambda_neg}
\end{align*}
where $v = \lambda_n - \sqrt{\lambda^2_n - \lambda_n}$.

For all $y^{(i)}$ that satisfies $0\leq \lambda_i<1$, from \eqref{y_t} and Lemma~\ref{lemma_a_and_2a}, we have
\begin{align*}
y_{t+1}^{(i)} \sin{\theta_i} = y_1^{(i)}\lambda_i^{t/2}\sin{[(t+1)\theta_i]} + \lambda_i\sum_{s=1}^{t}\beta_{s}^{(i)}\lambda_i^{(t-s)/2}\sin{[(t+1-s)\theta_i]},
\end{align*}
where $\beta_{s}^{(i)} = -\gamma h_s^{(i)} + \gamma h_{s-1}^{(i)}$ and $\theta_i = \arccos(\sqrt{\lambda_i})$.

Then
\begin{align*}
\left\|y_{t+1}^{(i)}\right\|^2\sin^2\theta_i
\leq & 2\left\|y_1^{(i)}\right\|^2\lambda_i^{t}\sin^2[(t+1)\theta_i] + 2\lambda_i^2\left(\sum_{s=1}^{t}\left\|\beta_{s}^{(i)}\sin{[(t+1-s)\theta_i]}\right\|\lambda_i^{(t-s)/2}\right)^2\\
\leq & 2\left\|y_1^{(i)}\right\|^2\lambda_i^{t} + 2\lambda_i^2\left(\sum_{s=1}^{t}\|\beta_{s}^{(i)}\|\lambda_i^{(t-s)/2}\right)^2,
\end{align*}
Summing from $t=0$ to $T-1$ gives
\begin{align*}
\sum_{t=0}^{T-1}\left\|y_{t+1}^{(i)}\right\|^2\sin^2\theta_i =\sum_{t=1}^{T}\left\|y_{t}^{(i)}\right\|^2\sin^2\theta_i
\leq &   2\left\|y_1^{(i)}\right\|^2 \sum_{t=0}^{T-1} \lambda_i^{t} + 2\lambda_i^2 \sum_{t=1}^{T-1} \left(\sum_{s=1}^{t}\|\beta_{s}^{(i)}\|\lambda_i^{(t-s)/2}\right)^2 
\end{align*}

From Lemma~\ref{lemma_rho_at}, $\sum_{s=1}^{t}\|\beta_{s}^{(i)}\|\lambda_i^{(t-s)/2}$ has the same structure as the sequence in Lemma~\ref{lemma_rho_at}, so we have
\begin{align*}
\sum_{t=1}^{T}\left\|y_{t}^{(i)}\right\|^2\sin^2\theta_i
\leq & \frac{2\left\|y_1^{(i)}\right\|^2}{1-\lambda_i} + \frac{2\lambda_i^2}{(1-\sqrt{\lambda_i})^2}\sum_{t=1}^{T-1}\left\|\beta_t^{(i)}\right\|^2.
\end{align*}
Then $\sin^2\theta_i=1-\lambda_i$ gives
\begin{align*}
\sum_{t=1}^{T}\left\|y_{t}^{(i)}\right\|^2\leq & \frac{2\left\|y_1^{(i)}\right\|^2}{(1-\lambda_i)^2} + \frac{2\lambda_i^2}{(1-\sqrt{\lambda_i})^2(1-\lambda_i)}\sum_{t=1}^{T-1}\left\|\beta_t^{(i)}\right\|^2\\
\leq & \frac{2\left\|y_1^{(i)}\right\|^2}{(1-\lambda)^2} + \frac{2\lambda^2}{(1-\sqrt{\lambda})^2(1-\lambda)}\sum_{t=1}^{T-1}\left\|\beta_t^{(i)}\right\|^2.\numberthis \label{lambda_pos}
\end{align*}
Denote $C_1 = \max \left\{\frac{1}{1-|v|^2},\frac{1}{(1-\lambda)^2}\right\}$ and $C_2 = \max \left\{\frac{\lambda_n^2}{(1-|v|^2)},\frac{\lambda^2}{(1-\sqrt{\lambda})^2(1-\lambda)}\right\}$. From \eqref{lambda_neg} and \eqref{lambda_pos}, we have
\begin{align*}
\sum_{t=1}^{T}\left\|y_{t}^{(i)}\right\|^2\leq & 2C_1\left\|y_1^{(i)}\right\|^2 + 2C_2\sum_{t=1}^{T-1}\left\|\beta_t^{(i)}\right\|^2.\numberthis \label{lemma_bound_x_ave_eq1}
\end{align*}
We next bound $\beta_{t}^{(i)}$
\begin{align*}
  & \mathbb{E}\sum_{i=2}^{n}\|\beta_{t}^{(i)}\|^2 \\
= & \sum_{i=2}^{n}\gamma^2\mathbb{E}\|h_t^{(i)}-h_{t-1}^{(i)}\|^2\\
= & \gamma^2\sum_{i=2}^{n}\mathbb{E}\left\|G\left(X_t;\xi_t\right)P\bm{e}^{(i)} - G\left(X_{t-1};\xi_{t-1}\right)P\bm{e}^{(i)}\right\|^2\\
\leq & \gamma^2\sum_{i=1}^{n}\mathbb{E}\left\|G\left(X_t;\xi_t\right)P\bm{e}^{(i)} - G\left(X_{t-1};\xi_{t-1}\right)P\bm{e}^{(i)}\right\|^2\\
= & \gamma^2\mathbb{E}\left\|G\left(X_t;\xi_t\right)P - G\left(X_{t-1};\xi_{t-1}\right)P\right\|^2_F\\
= & \gamma^2 \mathbb{E}\left\|G\left(X_t;\xi_t\right) - G\left(X_{t-1};\xi_{t-1}\right)\right\|^2_F \quad \text{(due to Lemma~\ref{lemma_bound_transformation} )} \\
= & \gamma^2\sum_{i=1}^n\mathbb{E}\left\|\nabla F_i\left(\bm{x}_t^{(i)};\xi^{(i)}_t\right)-\nabla F_i\left(\bm{x}_{t-1}^{(i)};\xi^{(i)}_{t-1}\right)\right\|^2\\
= & \gamma^2\sum_{i=1}^n\mathbb{E}\left\|\left(\nabla F_i\left(\bm{x}_t^{(i)};\xi^{(i)}_t\right)-\nabla f_i(\bm{x}_t^{(i)})\right)-\left( F_i\left(\bm{x}_{t-1}^{(i)};\xi^{(i)}_{t-1}\right) -\nabla f_i(\bm{x}_{t-1}^{(i)})\right) + \left( \nabla f_i\left(\bm{x}_t^{(i)}\right)-\nabla f_i\left(\bm{x}_{t-1}^{(i)}\right) \right) \right\|^2\\
= & 3\gamma^2\sum_{i=1}^n\mathbb{E}\left\|\nabla F_i\left(\bm{x}_t^{(i)};\xi^{(i)}_t\right)-\nabla f_i(\bm{x}_t^{(i)})\right\|^2 + 3\gamma^2\sum_{i=1}^n\left\|F_i\left(\bm{x}_{t-1}^{(i)};\xi^{(i)}_{t-1}\right) -\nabla f_i(\bm{x}_{t-1}^{(i)})\right\|^2\\
&  + 3\gamma^2\sum_{i=1}^n\left\|\nabla f_i\left(\bm{x}_t^{(i)}\right)-\nabla f_i\left(\bm{x}_{t-1}^{(i)}\right)\right\|^2\\
\leq & 6\gamma^2n\sigma^2 + 3\gamma^2\sum_{i=1}^n\mathbb{E}\left\|\nabla f_i\left(\bm{x}_t^{(i)}\right) - \nabla f_i\left(\bm{x}_{t-1}^{(i)}\right)\right\|^2\\
\leq & 6\gamma^2n\sigma^2 + 3\gamma^2\sum_{i=1}^nL^2\mathbb{E}\left\|\bm{x}_t^{(i)} -\bm{x}_{t-1}^{(i)}\right\|^2\\
= & 6\gamma^2n\sigma^2 + 3\gamma^2L^2\sum_{i=1}^n\mathbb{E}\left\|Y_{t}P^{\top}\bm{e}^{(i)}  - Y_{t-1}P^{\top}\bm{e}^{(i)}\right\|^2\\
= & 6\gamma^2n\sigma^2 + 3\gamma^2L^2\mathbb{E}\left\|Y_{t}P^{\top} - Y_{t-1}P^{\top}\right\|^2_F\\
= & 6\gamma^2n\sigma^2 + 3\gamma^2L^2\mathbb{E}\left\|Y_{t} - Y_{t-1}\right\|^2_F \quad \text{(due to Lemma~\ref{lemma_bound_transformation} )} \\
= & 6\gamma^2n\sigma^2 + 3\gamma^2L^2\sum_{i=1}^n\mathbb{E}\left\|y_{t}^{(i)} - y_{t-1}^{(i)}\right\|^2. \numberthis \label{lemma_bound_x_ave_bound_beta_l}
\end{align*}
Combing \eqref{lemma_bound_x_ave_eq1} and \eqref{lemma_bound_x_ave_bound_beta_l}, we have
\begin{align*}
\sum_{i=2}^{n}\sum_{t=1}^{T}\|y_{t}^{(i)}\|^2
\leq & 2C_1 \|Y_1\|^2_F + 2C_2 \sum_{i=2}^{n}\sum_{t=1}^{T-1}\|\beta_t^{(i)}\|^2\\
\leq & 2C_1\|Y_1\|^2_F + 2C_2\sum_{t=1}^{T-1}\left(6\gamma^2n\sigma^2 + 3\gamma^2L^2\sum_{i=1}^n\mathbb{E}\left\|y_{t}^{(i)} - y_{t-1}^{(i)}\right\|^2\right)\\
\leq & 2C_1\|Y_1\|^2_F + 12C_2\gamma^2n\sigma^2T + 6C_2\gamma^2L^2 \sum_{i=1}^n\sum_{t=1}^{T-1}\mathbb{E}\left\|y_{t}^{(i)} - y_{t-1}^{(i)}\right\|^2.\numberthis \label{lemma_bound_x_ave_eq2}
\end{align*}
The next step is to bound $\mathbb{E}\|y_{t}^{(1)} - y_{t-1}^{(1)}\|^2$. Because
\begin{align*}
y_t^{(1)} =  X_tP\bm{e}^{(1)} =  X_t\bm{v}_1=   X_t\frac{1}{\sqrt{n}}\bm{1}_n=  \overline{X_t}\sqrt{n},
\end{align*}
what we need to bound is $\mathbb{E}\|\overline{X}_{t+1}-\overline{X}_{t}\|^2$. From~\eqref{global_mean_evolution}, we have $\overline{X}_{t+1} =  \overline{X}_{t} - \gamma\overline{G}_{t}$. Therefore
\begin{align*}
\mathbb{E}\left\|\overline{X}_{t+1}-\overline{X}_{t}\right\|^2 =  \gamma^2\mathbb{E}\left\|\overline{G}_{t}\right\|^2
=\gamma^2 \mathbb{E}\|\overline{G}_{t}-\overline{\nabla}f(X_t)\|^2+ \gamma^2\left\|\overline{\nabla}f(X_t)\right\|^2\leq  \frac{\gamma^2\sigma^2}{n} + \gamma^2\left\|\overline{\nabla}f(X_t)\right\|^2,
\end{align*}
and we have the follow bound for $\mathbb{E}\|y_{t}^{(1)} - y_{t-1}^{(1)}\|^2$:
\begin{align*}
\mathbb{E}\left\|y_{t+1}^{(1)}-y_t^{(1)}\right\|^2
\leq & \gamma^2\sigma^2 + n\gamma^2\left\|\overline{\nabla}f(X_t)\right\|^2.\numberthis \label{lemma_bound_x_ave_y_t}
\end{align*}
Combing \eqref{lemma_bound_x_ave_eq2} and \eqref{lemma_bound_x_ave_y_t} we get
\begin{align*}
\sum_{i=2}^{n}\sum_{t=1}^{T}\|y_{t}^{(i)}\|^2
\leq & 2C_1\|Y_1\|^2_F + 12C_2\gamma^2n\sigma^2T + 6C_2\gamma^4L^2\sigma^2T + 6C_2\gamma^4L^2n \sum_{t=1}^{T-1}\left\|\overline{\nabla f}(X_t)\right\|^2\\
& + 6C_2\gamma^2L^2\sum_{i=2}^n\sum_{t=1}^{T-1}\mathbb{E}\left\|y_{t}^{(i)} - y_{t-1}^{(i)}\right\|^2\\
\leq & 2C_1\|Y_1\|^2_F + 12C_2\gamma^2n\sigma^2T + 6C_2\gamma^4L^2\sigma^2T + 6C_2\gamma^4L^2n \sum_{t=1}^{T-1}\left\|\overline{\nabla f}(X_t)\right\|^2 \\
& + 6C_2\gamma^2L^2\sum_{i=2}^n\sum_{t=1}^{T-1}2\mathbb{E}\left(\left\|y_{t}^{(i)}\right\|^2 + \left\|y_{t-1}^{(i)}\right\|^2\right)\\
\leq & 2C_1\|Y_1\|^2_F + 12C_2\gamma^2n\sigma^2T + 6C_2\gamma^4L^2\sigma^2T + 6C_2\gamma^4L^2n \sum_{t=1}^{T-1}\left\|\overline{\nabla f}(X_t)\right\|^2 \\
& + 6C_2\gamma^2L^2\sum_{i=2}^n\sum_{t=1}^{T-1}2\mathbb{E}\left(\left\|y_{t}^{(i)}\right\|^2 + \left\|y_{t}^{(i)}\right\|^2\right)\quad \text{(due to $y_{0}^{(i)}=\bm{0}$)}\\
\leq & 2C_1\|Y_1\|^2_F + 12C_2\gamma^2n\sigma^2T + 6C_2\gamma^4L^2\sigma^2T + 6C_2\gamma^4L^2n \sum_{t=1}^{T-1}\left\|\overline{\nabla f}(X_t)\right\|^2\\
& + 24C_2\gamma^2L^2\sum_{i=2}^n\sum_{t=1}^{T-1}\mathbb{E}\left\|y_{t}^{(i)}\right\|^2,\\
\left(1-24C_2\gamma^2L^2\right)\sum_{i=2}^{n}\sum_{t=1}^{T}\|y_{t}^{(i)}\|^2
\leq & 2C_1\|Y_1\|^2_F + 12C_2\gamma^2n\sigma^2T + 6C_2\gamma^4L^2\sigma^2T + 6C_2\gamma^4L^2n \sum_{t=1}^{T-1}\left\|\overline{\nabla f}(X_t)\right\|^2.
\end{align*}
Together with~\eqref{lemma_bound_substract_mean_eq0} and $X_0 = 0$, we have
\begin{align*}
\left(1-24C_2\gamma^2L^2\right)\sum_{i=1}^{n}\sum_{t=1}^{T}\left\|\overline{X}_t-\bm{x}_t^{(i)}\right\|^2
\leq & 2C_1\|Y_1\|^2_F + 12C_2\gamma^2n\sigma^2T + 6C_2\gamma^4L^2\sigma^2T + 6C_2\gamma^4L^2n \sum_{t=1}^{T-1}\left\|\overline{\nabla f}(X_t)\right\|^2\\
\text{(due to $\|X_1\|_F = \|Y_1\|_F$)}\quad\leq & 2C_1\|X_1\|^2_F + 12C_2\gamma^2n\sigma^2T + 6C_2\gamma^4L^2\sigma^2T + 6C_2\gamma^4L^2n \sum_{t=1}^{T-1}\left\|\overline{\nabla f}(X_t)\right\|^2.
\end{align*}
Actually, when $\lambda_n\leq -\frac{1}{3}$, we have $|v_n|\geq 1$, then $\|y^{(n)}_t\|^2\varpropto t $ and
\begin{align*}
\frac{1}{T}\sum_{i=1}^{n}\sum_{t=1}^{T}\left\|\overline{X}_t-\bm{x}_t^{(i)}\right\|^2 \leq T.
\end{align*}
The algorithm would fail to converge in this situation, and this is why $-1/3$ is the infimum of $\lambda_n$.
\end{proof}

\begin{lemma}\label{lemma_boundfplus}
Following the Assumption~\ref{ass:global}, we have
\begin{align*}
\mathbb{E}f(\overline{X}_{t+1})
\leq & \mathbb{E}f(\overline{X}_t)  -  \frac{\gamma_t}{2}\mathbb{E}\|\nabla f(\overline{X}_t)\|^2  -  \left(\frac{\gamma_t}{2}-\frac{L\gamma_t^2}{2}\right)\mathbb{E}\|\overline{\nabla f}(X_t)\|^2  +
	\frac{\gamma_t}{2}\mathbb{E}\|\nabla f(\overline{X}_t)  -
	\overline{\nabla f}(X_t)\|^2   +   \frac{L\gamma_t^2}{2n}\sigma^2.
\end{align*}
\end{lemma}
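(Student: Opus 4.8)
The plan is to treat this as the standard one-step descent inequality for stochastic gradient descent, exploiting the fact—already established in \eqref{global_mean_evolution}—that the global mean iterate obeys the exact SGD recursion $\overline{X}_{t+1} = \overline{X}_t - \gamma_t\,\overline{G}(X_t;\xi_t)$. First I would invoke the $L$-Lipschitz gradient assumption (item~1 of Assumption~\ref{ass:global}), which transfers to the average $f$, to write the quadratic upper bound
\begin{align*}
f(\overline{X}_{t+1}) \leq f(\overline{X}_t) - \gamma_t\langle \nabla f(\overline{X}_t),\, \overline{G}(X_t;\xi_t)\rangle + \frac{L\gamma_t^2}{2}\,\|\overline{G}(X_t;\xi_t)\|^2.
\end{align*}
Taking expectation and using that the sampled gradients are unbiased given the current iterates, so that $\mathbb{E}[\overline{G}(X_t;\xi_t)] = \overline{\nabla f}(X_t)$, converts the cross term into $\mathbb{E}\langle \nabla f(\overline{X}_t),\, \overline{\nabla f}(X_t)\rangle$.

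Next I would apply the polarization identity $\langle a,b\rangle = \tfrac12(\|a\|^2 + \|b\|^2 - \|a-b\|^2)$ with $a = \nabla f(\overline{X}_t)$ and $b = \overline{\nabla f}(X_t)$; this is what produces the three terms $-\tfrac{\gamma_t}{2}\mathbb{E}\|\nabla f(\overline{X}_t)\|^2$, $-\tfrac{\gamma_t}{2}\mathbb{E}\|\overline{\nabla f}(X_t)\|^2$, and $+\tfrac{\gamma_t}{2}\mathbb{E}\|\nabla f(\overline{X}_t) - \overline{\nabla f}(X_t)\|^2$ in the claimed bound. For the quadratic term I would use the bias–variance decomposition $\mathbb{E}\|\overline{G}(X_t;\xi_t)\|^2 = \|\overline{\nabla f}(X_t)\|^2 + \mathbb{E}\|\overline{G}(X_t;\xi_t) - \overline{\nabla f}(X_t)\|^2$, valid because the centered noise is mean-zero. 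Collecting the $\|\overline{\nabla f}(X_t)\|^2$ contributions from the cross term and the quadratic term yields exactly the coefficient $-(\tfrac{\gamma_t}{2} - \tfrac{L\gamma_t^2}{2})$.

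The one genuinely non-routine step is controlling the variance of the averaged stochastic gradient, and this is where the favorable $\tfrac{1}{n}$ factor appears. Writing $\overline{G}(X_t;\xi_t) - \overline{\nabla f}(X_t) = \tfrac1n\sum_{i=1}^n\big(\nabla F_i(\bm{x}_t^{(i)};\xi_t^{(i)}) - \nabla f_i(\bm{x}_t^{(i)})\big)$, I would use that the per-worker noise terms are independent across $i$ and each mean-zero, so their cross-expectations vanish and the squared norm of the average reduces to $\tfrac{1}{n^2}\sum_i \mathbb{E}\|\nabla F_i(\bm{x}_t^{(i)};\xi_t^{(i)}) - \nabla f_i(\bm{x}_t^{(i)})\|^2$. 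Applying the bounded-variance assumption (item~2 of Assumption~\ref{ass:global}) bounds each summand by $\sigma^2$, giving $\mathbb{E}\|\overline{G}(X_t;\xi_t) - \overline{\nabla f}(X_t)\|^2 \leq \tfrac{\sigma^2}{n}$ and hence the final noise term $\tfrac{L\gamma_t^2}{2n}\sigma^2$. I expect this independence-across-workers argument to be the main point requiring care, since it is precisely what distinguishes the $\sigma^2/n$ scaling (and thus the linear-speedup behavior) from a naive $\sigma^2$ bound; everything else is assembling the smoothness inequality and the polarization identity.
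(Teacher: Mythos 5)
Your proposal is correct and follows essentially the same route as the paper's proof: the smoothness upper bound applied to the mean iterate recursion \eqref{global_mean_evolution}, unbiasedness of the stochastic gradients, the polarization identity $2\langle \bm{a},\bm{b}\rangle=\|\bm{a}\|^2+\|\bm{b}\|^2-\|\bm{a}-\bm{b}\|^2$ for the cross term, and independence of the per-worker noise to obtain the $\sigma^2/n$ variance bound. No gaps; the only cosmetic difference is that the paper writes out the vanishing cross-expectation terms explicitly rather than citing independence in one line.
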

\begin{proof}
From \eqref{global_mean_evolution}, we have
\begin{align*}
\overline{X}_{t+1} =  \oX_t - \gamma_t \oG(X_t; \xi_t).
\end{align*}
From item 1 of Assumption~\ref{ass:global}, we know that $f$ has a $L$-Lipschitz continuous gradient. So, we have
\begin{align*}
\mathbb{E}f(\overline{X}_{t+1}) \le & \mathbb{E}f(\overline{X}_t)+\mathbb{E}\left\langle\nabla f(\overline{X}_t), -\gamma_t\overline{G}(X_t;\xi_t)\right\rangle  +  \frac{L}{2}\mathbb{E}\left\|-\gamma_t\overline{G}(X_t;\xi_t)\right\|^2 \\
= & \mathbb{E}f(\overline{X}_t)  +  \mathbb{E}\langle\nabla f(\overline{X}_t), -\gamma_t\mathbb{E}_{\xi_t}\overline{G}(X_t;\xi_t)\rangle  +  \frac{L\gamma_t^2}{2}\mathbb{E}\|\overline{G}(X_t;\xi_t)\|^2 \\
= &  \mathbb{E}f(\overline{X}_t)  -  \gamma_t\mathbb{E}\langle\nabla f(\overline{X}_t), \overline{\nabla f}	(X_t)\rangle  +   \frac{L\gamma_t^2}{2}\mathbb{E}\|(\overline{G}(X_t;\xi_t) - \overline{\nabla f}(X_t))+\overline{\nabla f}(X_t)\|^2\\
= & \mathbb{E}f(\overline{X}_t)  -  \gamma_t\mathbb{E}\langle\nabla f(\overline{X}_t), \overline{\nabla f}	(X_t)\rangle  +   \frac{L\gamma_t^2}{2}\mathbb{E}\|\overline{G}(X_t;\xi_t) - \overline{\nabla f}(X_t)\|^2  +  \frac{L\gamma_t^2}{2}\mathbb{E}\|\overline{\nabla f}(X_t)\|^2 \\
& +  {L\gamma_t^2}\mathbb{E}\langle\mathbb{E}_{\xi_t}\overline{G}(X_t;\xi_t) - \overline{\nabla f}(X_t),\overline{\nabla f}(X_t)\rangle\\
= & \mathbb{E}f(\overline{X}_t)  -  \gamma_t\mathbb{E}\langle\nabla f(\overline{X}_t), \overline{\nabla f}	(X_t)\rangle  +   \frac{L\gamma_t^2}{2}\mathbb{E}\|\overline{G}(X_t;\xi_t) - \overline{\nabla f}	(X_t)\|^2  +  \frac{L\gamma_t^2}{2}\mathbb{E}\|\overline{\nabla f}(X_t)\|^2\\
    = & \mathbb{E}f(\overline{X}_t)  -  \gamma_t\mathbb{E}\langle\nabla f(\overline{X}_t), \overline{\nabla f}	(X_t)\rangle  +   \frac{L\gamma_t^2}{2n^2}\mathbb{E}\left\|\sum_{i=1}^n\left(\nabla F_i(x_t^{(i)};\xi_t^{(i)}) - \nabla f_i	(x_t^{(i)})\right)\right\|^2\\
    & +  \frac{L\gamma_t^2}{2}\mathbb{E}\|\overline{\nabla f}(X_t)\|^2\\
    = & \mathbb{E}f(\overline{X}_t)  -  \gamma_t\mathbb{E}\langle\nabla f(\overline{X}_t), \overline{\nabla f}	(X_t)\rangle  +   \frac{L\gamma_t^2}{2n^2}\sum_{i=1}^n\mathbb{E}\left\|\nabla F_i(x_t^{(i)};\xi_t^{(i)}) - \nabla f_i	(x_t^{(i)})\right\|^2\\
    & + \sum_{i\neq i'}^n\mathbb{E}\left\langle \mathbb{E}_{\xi_t}\nabla F_i(x_t^{(i)};\xi_t^{(i)}) - \nabla f_i(x_t^{(i)}),\nabla \mathbb{E}_{\xi_t}F_{i'}(x_t^{(i')};\xi_t^{(i')}) - \nabla f_{i'}	(x_t^{(i')}) \right\rangle +  \frac{L\gamma_t^2}{2}\mathbb{E}\|\overline{\nabla f}(X_t)\|^2\\
\leq & \mathbb{E}f(\overline{X}_t)  -  \gamma_t\mathbb{E}\langle\nabla f(\overline{X}_t),
	\overline{\nabla f}(X_t)\rangle  +    \frac{L\gamma_t^2}{2n}\sigma^2 + \frac{L\gamma_t^2}{2}\mathbb{E}\|\overline{\nabla f}(X_t)\|^2\\
= & \mathbb{E}f(\overline{X}_t)  -  \frac{\gamma_t}{2}\mathbb{E}\|\nabla f(\overline{X}_t)\|^2  -  \frac{\gamma_t}{2}\mathbb{E}\|\overline{\nabla f}(X_t)\|^2  +
	\frac{\gamma_t}{2}\mathbb{E}\|\nabla f(\overline{X}_t)  -
	\overline{\nabla f}(X_t)\|^2  +  \frac{L\gamma_t^2}{2}\mathbb{E}\|\overline{\nabla f}(X_t)\|^2\\
		&  +   \frac{L\gamma_t^2}{2n}\sigma^2\quad \text{(due to $2\langle \bm{a},\bm{b}\rangle=\|\bm{a}\|^2+\|\bm{b}\|^2-\|\bm{a}-\bm{b}\|^2$)}\\
= &  \mathbb{E}f(\overline{X}_t)  -  \frac{\gamma_t}{2}\mathbb{E}\|\nabla f(\overline{X}_t)\|^2  -  (\frac{\gamma_t}{2}-\frac{L\gamma_t^2}{2})\mathbb{E}\|\overline{\nabla f}(X_t)\|^2  +
	\frac{\gamma_t}{2}\mathbb{E}\|\nabla f(\overline{X}_t)  -
	\overline{\nabla f}(X_t)\|^2    +   \frac{L\gamma_t^2}{2n}\sigma^2, \numberthis \label{lemma_boundfplus_eq}
\end{align*}
which completes the proof.
\end{proof}

{\bf \noindent Proof to Theorem \ref{theo_1}}
\begin{proof}
We first estimate the upper bound for
$\mathbb{E}\|\nabla f(\overline{X}_t)  -  \overline{\nabla f}(X_t)\|^2$:
\begin{align*}
\mathbb{E}\|\nabla f(\overline{X}_t)  -  \overline{\nabla f}(X_t)\|^2 = & \frac{1}{n^2}\mathbb{E}{\left\|
	\sum_{i=1}^n\left(\nabla f_i(\overline{X}_t)  -  \nabla f_i(\bm{x}_t^{(i)})\right)\right\|^2}\\
\leq & \frac{1}{n}\sum_{i=1}^n\mathbb{E}\left\|\nabla f_i(\overline{X}_t)  -  \nabla f_i(\bm{x}_t^{(i)})\right\|^2\\
\leq & \frac{L^2}{n}\mathbb{E}\sum_{i=1}^n\left\|\overline{X}_t-\bm{x}_t^{(i)}\right\|^2. \numberthis \label{theo:boundnabla}
\end{align*}

Combining \eqref{lemma_boundfplus_eq} in Lemma~\ref{lemma_boundfplus} and \eqref{theo:boundnabla} yields
\begin{align*}
& \frac{\gamma_t}{2}\mathbb{E}\|\nabla f(\overline{X_t})\|^2
	+ \left(\frac{\gamma_t}{2}-\frac{L\gamma_t^2}{2}\right)\mathbb{E}\|\overline{\nabla f}(X_t)\|^2
	\\
\leq & \mathbb{E}f(\overline{X}_t)-\mathbb{E}f(\overline{X}_{t+1})
	+ \frac{\gamma_t}{2}\mathbb{E}\|\nabla f(\overline{X_t})  - \overline{\nabla f}(X_t)\|^2
 	 + \frac{L\gamma_t^2}{2n}\sigma^2\\
\leq &  \mathbb{E}f(\overline{X}_t)-\mathbb{E}f(\overline{X}_{t+1})
	+ \frac{L^2\gamma_t}{2n}\sum_{i=1}^{n}\|\overline{X}_t-x_t^{(i)}\|^2
	+  \frac{L\gamma_t^2}{2n}\sigma^2.
\end{align*}

Setting $\gamma_t=\gamma$,  we obtain
\begin{align*}
\mathbb{E}\|\nabla f(\overline{X_t})\|^2 + \left(1-L\gamma\right)\mathbb{E}\|\overline{\nabla f}(X_t)\|^2
\leq  \frac{2}{\gamma}\left(\mathbb{E}f(\overline{X}_t)-f^*-\left(\mathbb{E}f(\overline{X}_{t+1})-f^*\right)\right)
	+ \frac{L^2}{n}\sum_{i=1}^{n}\|\overline{X}_t-x_t^{(i)}\|^2 +  \frac{L\gamma}{n}\sigma^2. \numberthis \label{theo_eq_1}
\end{align*}
From Lemma~\ref{lemma_bound_x_ave} , we have
\begin{align*}
\left(1-24C_2\gamma^2L^2\right)\sum_{i=1}^{n}\sum_{t=0}^{T}\left\|\overline{X}_t-\bm{x}_t^{(i)}\right\|^2
& \leq 2C_1\|X_1\|^2_F + 12C_2\gamma^2n\sigma^2T + 6C_2\gamma^4L^2\sigma^2T\\
& + 6C_2\gamma^4L^2n \sum_{t=1}^{T-1}\left\|\overline{\nabla f}(X_t)\right\|^2,
\end{align*}
If $\gamma$ is not too large that satisfies $1-24C_2\gamma^2L^2 > 0$, then denote $C_3 = 1-24C_2\gamma^2L^2$, we would have
\begin{align*}
\sum_{i=1}^{n}\sum_{t=0}^{T}\left\|\overline{X}_t-\bm{x}_t^{(i)}\right\|^2
\leq & \frac{2C_1}{C_3}\|X_1\|^2_F + \frac{12C_2\gamma^2n\sigma^2T}{C_3} + \frac{6C_2\gamma^4L^2\sigma^2T}{C_3} + \frac{6C_2\gamma^4L^2n}{C_3} \sum_{t=1}^{T-1}\left\|\overline{\nabla f}(X_t)\right\|^2. \numberthis \label{them_bound_x_ave}
\end{align*}
 Summarizing both sides of \eqref{theo_eq_1} and applying \eqref{them_bound_x_ave} yields
\begin{align*}
& \sum_{t=0}^{T-1}\left(\mathbb{E}\|\nabla f(\overline{X_t})\|^2 + \left(1-L\gamma\right)\mathbb{E}\|\overline{\nabla f}(X_t)\|^2\right) \\
\leq & \frac{2\mathbb{E}f(\overline{X}_0)-2f^*}{\gamma} +\frac{L^2}{n}\sum_{t=0}^{T}\sum_{i=1}^n\mathbb{E}\left\|\overline{X}_t-x_t^{(i)}\right\|^2 + \frac{LT\gamma}{n}\sigma^2  \\
\leq & \frac{2(f(0)-f^*)}{\gamma}
 + \frac{LT\gamma}{n}\sigma^2 + \frac{2L^2C_1}{nC_3}\|X_1\|^2_F + \frac{12L^2C_2\gamma^2\sigma^2T}{C_3} + \frac{6L^2C_2\gamma^4L^2\sigma^2T}{nC_3}\\
 & + \frac{6L^2C_2\gamma^4L^2}{C_3} \sum_{t=1}^{T-1}\left\|\overline{\nabla f}(X_t)\right\|^2.
\end{align*}
It implies
\begin{align*}
& \sum_{t=0}^{T-1}\left(\mathbb{E}\|\nabla f(\overline{X_t})\|^2 + \left(1-L\gamma-\frac{6L^2C_2\gamma^4L^2}{C_3}\right)\mathbb{E}\|\overline{\nabla f}(X_t)\|^2\right)
\\
\leq & \frac{2(f(0)-f^*)}{\gamma}
 + \frac{LT\gamma}{n}\sigma^2 + \frac{2L^2C_1}{nC_3}\|X_1\|^2_F + \frac{12L^2C_2\gamma^2n\sigma^2T}{nC_3} + \frac{6L^2C_2\gamma^4L^2\sigma^2T}{nC_3}\\
= & \frac{2(f(0)-f^*)}{\gamma}
 + \frac{LT\gamma}{n}\sigma^2 + \frac{2L^2C_1\gamma^2}{nC_3}\|G(0;\xi_0)\|^2_F + \frac{12L^2C_2\gamma^2\sigma^2T}{C_3} + \frac{6L^2C_2\gamma^4L^2\sigma^2T}{nC_3}.
\numberthis \label{theo_final_1}
\end{align*}
However, $\|G(0;\xi_0)\|^2_F$ can be expanded as:
\begin{align*}
\|G(0,\xi_0)\|^2_F = & \sum_{i=1}^n\left\|\left(\nabla F_i(\bm{0},\xi_1) - \nabla f_i(\bm{0})\right) + \left(\nabla f_i(\bm{0}) - \nabla f(\bm{0})\right) + \nabla f(\bm{0})\right\|^2\\
\leq & 3n\sigma^2 + 3n\zeta_0^2 + 3n\|\nabla f(\bm{0})\|^2, \numberthis \label{bound_G_0}
\end{align*}
where $\zeta_0 = \frac{1}{n}\sum_{i=1}^n \|\nabla f_i(\bm{0}) - \nabla f(\bm{0})\|^2$ indicates the difference between different workers' dataset at the start point.
Combining \eqref{theo_final_1} and \eqref{bound_G_0}, then we have
\begin{align*}
& \sum_{t=0}^{T-1}\left(\mathbb{E}\|\nabla f(\overline{X_t})\|^2 + \left(1-L\gamma-\frac{6L^2C_2\gamma^4L^2}{C_3}\right)\mathbb{E}\|\overline{\nabla f}(X_t)\|^2\right)\\
\leq & \frac{2(f(0)-f^*)}{\gamma}
 + \frac{LT\gamma}{n}\sigma^2  + \frac{12L^2C_2\gamma^2\sigma^2T}{C_3} + \frac{6L^2C_2\gamma^4L^2\sigma^2T}{nC_3}\\
 & + \frac{6L^2C_1\gamma^2\sigma^2}{C_3} + \frac{6L^2C_1\gamma^2\zeta_0^2}{C_3} + \frac{6L^2C_1\gamma^2}{C_3}\|\nabla f(\bm{0})\|^2.
\end{align*}
Then we have
\begin{align*}
& \left(1 - \frac{6L^2C_1\gamma^2}{C_3}\right)\|\nabla f(\bm{0})\|^2 + \sum_{t=1}^{T-1}\left(\mathbb{E}\|\nabla f(\overline{X_t})\|^2 + \left(1-L\gamma-\frac{6L^2C_2\gamma^4L^2}{C_3}\right)\mathbb{E}\|\overline{\nabla f}(X_t)\|^2\right)\\
\leq & \frac{2(f(0)-f^*)}{\gamma}
 + \frac{LT\gamma}{n}\sigma^2  + \frac{12L^2C_2\gamma^2\sigma^2T}{C_3} + \frac{6L^2C_2\gamma^4L^2\sigma^2T}{nC_3}
  + \frac{6L^2C_1\gamma^2\sigma^2}{C_3} + \frac{6L^2C_1\gamma^2\zeta_0^2}{C_3}.
\end{align*}
Denote
\begin{align*}
A_1 = & 1 - \frac{6L^2C_1\gamma^2}{C_3}\\
A_2 = & 1-L\gamma-\frac{6L^2C_2\gamma^4L^2}{C_3},
\end{align*}
it becomes
\begin{align*}
& A_1\|\nabla f(\bm{0})\|^2 + \sum_{t=1}^{T-1}\left(\mathbb{E}\|\nabla f(\overline{X_t})\|^2 + A_2\mathbb{E}\|\overline{\nabla f}(X_t)\|^2\right)\\
\leq & \frac{2(f(0)-f^*)}{\gamma}
 + \frac{LT\gamma}{n}\sigma^2  + \frac{12L^2C_2\gamma^2n\sigma^2T}{nC_3} + \frac{6L^2C_2\gamma^4L^2\sigma^2T}{nC_3}
  + \frac{6L^2C_1\gamma^2\sigma^2}{C_3} + \frac{6L^2C_1\gamma^2\zeta_0^2}{C_3}.
\end{align*}
It completes the proof.
\end{proof}

{\bf \noindent Proof to \Cref{corollary2}}
\begin{proof}
From the value of $\gamma$, we obtain
\begin{align*}
C_2\gamma^2L^2 \leq \frac{1}{64},\quad  C_1\gamma^2L^2 \leq \frac{1}{36}.
\end{align*}
Therefore
\begin{align*}
C_3 = &1-24C_2\gamma^2L^2 \geq \frac{1}{2},\\
A_1 = &1-\frac{6L^2C_1\gamma^2}{C_3} \geq \frac{1}{2},\\
A_2 = & 1-L\gamma -\frac{6L^2C_2\gamma^4L^2}{C_3} > 0,\\
\gamma^2 \leq & \frac{n}{nL^2 + \sigma^2T},\\
\gamma^4 \leq & \frac{n^2}{n^2L^4 + \sigma^4T^2}.
\end{align*}
Then we can remove the $\|\overline{\nabla f}(X_t)\|^2$ and $\|\nabla f(\bm{0})\|^2 $ on the left hand side of~\eqref{eq:thm2} in Theorem~\ref{theo_1}, and~\eqref{eq:thm2} becomes
\begin{align*}
  \frac{1}{T}\sum_{t=0}^{T-1}\mathbb{E}\|\nabla f(\overline{X_t})\|^2
\leq & \frac{4(f(0) - f^*)L(8\sqrt{C_2} + 6\sqrt{C_1} )}{T} + \frac{4(f(0) - f^*)\sigma}{\sqrt{Tn}} \\
&+ \frac{2L\sigma}{\sqrt{Tn}} + \frac{48nL^2C_2\sigma^2}{nL^2 + \sigma^2T} + \frac{24L^4n\sigma^2C_2}{n^2L^4 + \sigma^4T^2} \\
&+ \frac{24nL^2C_1\sigma^2}{T(nL^2 + \sigma^2T)} + \frac{24L^2C_1\zeta_0^2}{T(nL^2 + \sigma^2T)},
\end{align*}
which completes the proof.
\end{proof}



\end{document}